
\documentclass[journal]{IEEEtran}
\ifCLASSINFOpdf
\else
\fi
%
%

%
\usepackage[cmex10]{amsmath}
\usepackage{amsthm}
\usepackage{amsmath}
\usepackage{wrapfig}
\usepackage{nccmath}
\usepackage{amssymb}
\usepackage{graphicx}
\usepackage{cite}
\usepackage{cuted}

\newtheorem{definition}{Definition}[]
\newtheorem{theorem}{Theorem}[]
\newtheorem{prop}{Proposition}[]

\newtheorem{corollary}{Corollary}[]

\providecommand{\customgenericname}{}
\newcommand{\newcustomtheorem}[2]{%
  \newenvironment{#1}[1]
  {%
   \renewcommand\customgenericname{#2}%
   \renewcommand\theinnercustomgeneric{##1}%
   \innercustomgeneric
  }
  {\endinnercustomgeneric}
}
\newcustomtheorem{customlemma}{Lemma}
\newcommand\smallO{
  \mathchoice
    {{\scriptstyle\mathcal{O}}}
    {{\scriptstyle\mathcal{O}}}
    {{\scriptscriptstyle\mathcal{O}}}
    {\scalebox{.6}{$\scriptscriptstyle\mathcal{O}$}}
  }

\hyphenation{op-tical net-works semi-conduc-tor}

\usepackage{color}

\usepackage[hyperfootnotes=false]{hyperref}
\usepackage{cleveref}
\definecolor{darkblue}{rgb}{0,0,0.5}
\hypersetup{
colorlinks=true,
linkcolor=black,
filecolor=blue,
citecolor=darkblue,  
urlcolor=black,
}

\begin{document}
%
\title{Information contraction in noisy binary neural networks and its implications
}
%
%
%

\author{Chuteng~Zhou,
		Quntao~Zhuang,
        Matthew~Mattina,
        ~and~Paul~N.~Whatmough

\thanks{C. Zhou is with the Arm ML Research Lab, Boston, MA USA (e-mail: chu.zhou@arm.com).}
\thanks{Q. Zhuang is with the Department of Electrical and Computer Engineering and James C. Wyant College of Optical Sciences, University of Arizona, Tuscon, AZ USA (e-mail: zhuangquntao@email.arizona.edu).}%
\thanks{M. Mattina is with the Arm ML Research Lab, Boston, MA USA (e-mail: matthew.mattina@arm.com)}%
\thanks{P. N. Whatmough is with the Arm ML Research Lab, Boston, MA USA (e-mail: paul.whatmough@arm.com).}%
}

%
%

\markboth{Preprint, January~2021}%
{Shell \MakeLowercase{\textit{et al.}}: Bare Demo of IEEEtran.cls for Journals}
%



\maketitle

\begin{abstract}
Neural networks have gained importance as the machine learning models that achieve state-of-the-art performance on large-scale image classification, object detection and natural language processing tasks. In this paper, we consider noisy binary neural networks, where each neuron has a non-zero probability of producing an incorrect output. These noisy models may arise from biological, physical and electronic contexts and constitute an important class of models that are relevant to the physical world. 
Intuitively, the number of neurons in such systems has to grow to compensate for the noise while maintaining the same level of expressive power and computation reliability. 
Our key finding is a lower bound for the required number of neurons in noisy neural networks, which is first of its kind. To prove this lower bound, we take an information theoretic approach and obtain a novel strong data processing inequality (SDPI), which not only generalizes the Evans-Schulman results for binary symmetric channels to general channels, but also improves the tightness drastically when applied to estimate end-to-end information contraction in networks. Our SDPI can be applied to various information processing systems, including neural networks and cellular automata. 
Applying the SDPI in noisy binary neural networks, we obtain our key lower bound and investigate its implications on network depth-width trade-offs, our results suggest a depth-width trade-off for noisy neural networks that is very different from the established understanding regarding noiseless neural networks. 
Furthermore, we apply the SDPI to study fault-tolerant cellular automata and obtain bounds on the error correction overheads and the relaxation time. 
This paper offers new understanding of noisy information processing systems through the lens of information theory.
\end{abstract}

\begin{IEEEkeywords}
Strong data processing inequality, noisy neural networks,
depth-width trade-offs,
fault tolerance,
probabilistic cellular automata, relaxation time
\end{IEEEkeywords}

%

\IEEEpeerreviewmaketitle

\section{Introduction}
%
%
%
%

Inspired by biological neural systems, artificial neural networks have gained great success in a wide variety of tasks such as computer vision, speech recognition and natural language processing. The majority of modern deep neural networks are based on artificial neurons which have a deterministic behavior. There has been rising interest in the study of neural networks with stochastic neurons. These noisy neural networks can be encountered in physical implementations of neural networks with analog components which are intrinsically noisy \cite{lin2018all, yao2020fully, shen2017deep}. Studies of animal brains show that biological neurons are stochastic and our perception is fundamentally limited by the information that the brain can extract from the noisy dynamics of sensory neurons \cite{rumyantsev2020fundamental}. Furthermore, neurons that fire stochastically are used in models such as the restricted Boltzmann machines \cite{fischer2014training} and are shown to enhance the robustness in some deep neural networks \cite{bengio2013estimating}. Information theory provides a rigorous and insightful tool to model and understand noise in these systems.

Von Neumann \cite{von1956probabilistic} investigated the feasibility of computing Boolean functions with a circuit composed of noisy gates with a bounded number of inputs. He arrived at the conclusion that a Boolean function can be $\delta$-reliably (meaning the probability of producing a wrong answer for any input is at most $\delta$) computed at the cost of an increase in the circuit depth. 
Von Neumann's results have been advanced over the years by a number of authors, notably Pippenger \cite{pippenger1988reliable}, Feder \cite{feder1989reliable}, Evans and Schulman \cite{evans1999signal}. 
The same question about noisy computation may be similarly posed for binary neural networks, which are known to be able to represent any Boolean functions \cite{anthony2003boolean}. In this paper, we leverage information theory to prove that, in analogy to Von Neumann's result, an increasing number of neurons is the cost to pay for using noisy neurons to perform computation. 

The analytic tool that is central to obtaining these results is the data processing inequality and its tighter variants. The data processing inequality (DPI) is a fundamental result of information theory. It formalizes the intuition that local data post-processing cannot increase mutual information. It is formulated as a contraction of mutual information for a Markov chain, which is  $I(X;Z)/I(X;Y)\le 1$ for $X \rightarrow Y \rightarrow Z$. When the channel $Y\rightarrow Z$ can be  characterized, channel dependent improvements of the DPI may be derived: $I(X;Z)/I(X;Y)\le \eta$ with $\eta<1$. These improved inequalities are called strong data processing inequalities (SDPIs) or sometimes quantified data processing inequalities. 

The DPI and SDPIs are particularly useful for proving impossibility results and find applications in a wide range of fields such as physics~\cite{dobrushin1970definition}, finance~\cite{erkip1998efficiency} and differential privacy~\cite{duchi2013local}. Many of the results are obtained for Binary Symmetric Channels (BSCs), which is the most common scenario to consider. Regarding noisy computation, Pippenger~\cite{pippenger1988reliable} first derived an SDPI for BSCs and applied it to show an upper bound on the maximum tolerable component noise in Boolean formulas. Evans and Schulman~\cite{evans1999signal} improved the SDPI bound of Pippenger and used it to obtain lower bounds on the complexity of reliable circuits with noisy components. 

In this paper, we derive an SDPI for $Y\rightarrow Z$ being a multiple-input and multiple-output noisy channel. Our SDPI strictly improves the tightness of the end-to-end information contraction bound from the Evans and Schulman method~\cite{evans1999signal}; more importantly, as our SDPI allows general channels other than BSCs being considered in Ref.~\cite{evans1999signal}, it can be applied to a wide range of different problems. 
As a first application, we apply this result to neural networks composed of noisy binary neurons. Our information contraction bound has a network width-depth trade-off relationship, which is then combined with network size complexity results to obtain lower bounds on the number of noisy neurons for reliable computation of Boolean functions such as the parity. 

As a second application, we further apply our SDPI derived results to fault-tolerant cellular automata~\cite{gacs2001reliable,gray2001reader,toom1974nonergodic}. We obtain bounds on the error correction overhead and the relaxation time. These findings can potentially shed new light on a wide range of topics, including phase transitions in statistical physics~\cite{liggett2012interacting,wolfram1983statistical,grinstein1985statistical} and stability of time-crystals~\cite{yao2020classical}.

The paper is organized as follows. In Section~\ref{Sec:def}, we introduce the basic definitions and notations. Section~\ref{Sec:SDPI} contains the major theorem of our SDPI and its variants applicable to different scenarios. The application of our SDPI to noisy binary neural networks is presented in section~\ref{Sec:NN}, while the case to fault-tolerant cellular automata in Section~\ref{Sec:CA}. Finally, we conclude in Section~\ref{Sec:conclusion} with a summary and some future directions.

\section{Definitions and notations}
\label{Sec:def}
Without loss of generality, a noisy binary neuron considered in this paper is defined as follows. For a noisy binary neuron that has $n$ synapses $(x_1,...,x_n)\in \left\{0,1\right\}^n$ with the corresponding synaptic weights $(w_1,...,w_n) \in {\rm I\!R}^n$ and a bias $\theta \in {\rm I\!R}$, the output of the neuron is a stochastic function
\begin{equation}
\hat{y}\,=\, \begin{cases}
  y \;\;\mathrm{with\;probability\;}1-p\\    
  (1-y) \;\;\mathrm{with\;probability\;}p
\end{cases},
\end{equation}
where
\begin{equation}
y\,=\,\mathrm{sgn}\left( \sum_{i=1}^n w_i x_i + \theta \right).
\end{equation}
And $\mathrm{sgn}$ is the sign function defined as
\begin{equation}
\mathrm{sgn}(\alpha)\,=\,\begin{cases}
  1 \;\;\mathrm{if\;}\alpha \ge0\\    
  0 \;\;\mathrm{Otherwise}
  \end{cases}.
\end{equation}
A schematic of this model is shown in Figure \ref{fig:Noisy_neuron}. A noisy binary neuron can be decomposed into a noiseless binary neuron followed by a noisy channel (denoted by NC in the schematic). With our definition above, the noisy channel is a BSC with bit flip probability $p$ (sometimes acronym-ed $\mathrm{BSC}(p)$). A noisy binary neuron is said to be $\xi$-noisy when the neuron misfiring probability $p$ is equal to $\xi$. 
A noiseless binary neuron described above is also called a threshold gate in some literature and is a basic unit of the class Threshold Circuits (\textbf{TC}) in circuit complexity theory \cite{impagliazzo1997size}.
\begin{figure}[tbp]
\begin{center}
\includegraphics[width=1.\linewidth]{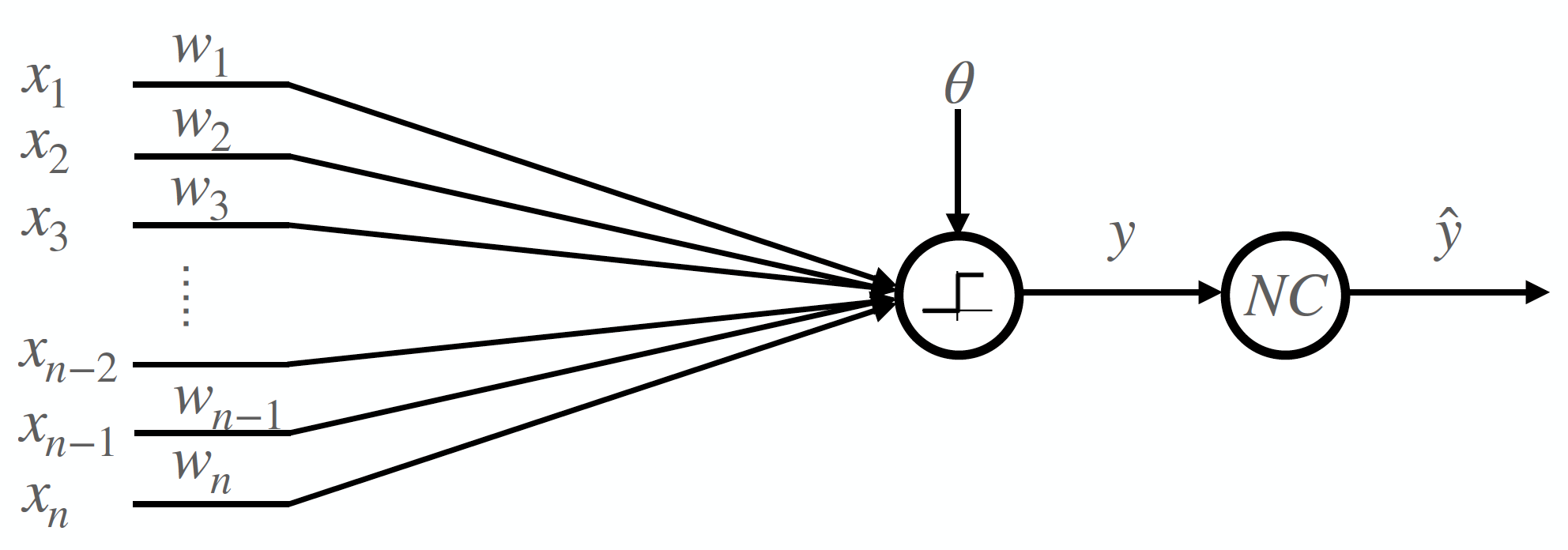}
\caption{Schematic of a noisy binary neuron with synapses $(x_1,...,x_n)$, synaptic weights $(w_1,...,w_n)$, bias $\theta$, binary step function activaition and noisy output $\hat{y}$.}\label{fig:Noisy_neuron}
\end{center}
\end{figure}

Even though so far we have only seen BSCs in our model, the key result of this paper is to prove an SDPI for a channel with an arbitrary number of input and output letter alphabets, which will then be used to improve bounds on estimating information contraction in a network composed of noisy binary neurons. 
Next, we give the definition of an \textit{n-input, m-output noisy channel} $X\rightarrow Y$. 
\begin{definition}\label{def:noisy_channel}
An n-input, m-output noisy channel is a mapping between a categorical random variable on $n$ elements (or an $n$-letter alphabet) $X$ and a categorical random variable on $m$ elements (or an $m$-letter alphabet) $Y$. A noisy channel is characterized by its transition matrix $\mathbf{A}$ such that $p_Y = p_X \mathbf{A}$, where $p_X$ and $p_Y$ are probability distributions over their letter alphabets represented as row vectors.
\end{definition}
Following our definition, $\mathbf{A}$ is an $n$ by $m$ matrix whose $(i,j)$-th entry $a_{i,j}$ is the probability that input $i$ is transformed into output $j$ by the channel. The entries of $A$ are non-negative and the rows sum to $1$: $a_{i,j}\ge 0$ and $\sum_{j} a_{i,j} = 1$.

We hereby give the definition of a simply layered network, which is a constraint on neuron connectivity that we use throughout this paper.
\begin{definition}
A neural network is said to be simply layered if its inputs and neurons can be partitioned into subsets called layers, such that every synaptic connection happens between adjacent layers and inputs only appear at layer $0$.
\end{definition}
The above definition describes the most commonly studied neural network architectures in a theoretical setting, namely the regular feed-forward networks with hidden layers but no skip connections allowed between non-adjacent layers.

Our main result will be concerning the mutual information
\begin{equation*}
I(X;Y)=H(X)+H(Y)-H(X,Y)=H(Y)-H(Y|X)
\end{equation*} 
between two random variables $X$ and $Y$, where the Shannon entropy $H(X)=-\sum_x p_X(x)\log\left(p_X(x)\right)$ can be calculated from the probability distribution $p_X$. We use the natural logarithm in this paper, however, our results will not depend on this choice. Here
\begin{equation*}
H(Y|X)=H(X,Y)-H(X)=\sum_{x}p_{X}(x)H(Y|X=x)
\end{equation*}
is the conditional entropy, which can be calculated from the conditional probability $p_{Y|X=x}(\cdot)$.

\section{A Strong Data Processing Inequality for general $n$-input $m$-output noisy channels}
\label{Sec:SDPI}
In this section, we prove our SDPI of Theorem~\ref{theo:1} and apply it to a network layer composed of noisy binary neurons, with Proposition~\ref{prop:1} for independent noise and Proposition~\ref{prop:2} for weakly-correlated noise.

Consider a Markov chain $X\rightarrow Y \rightarrow Z$, our objective is to bound the mutual information contraction ratio $I(X;Z)/I(X;Y)$ for $Y \rightarrow Z$ being an $n$-input, $m$-output noisy channel defined in Definition \ref{def:noisy_channel}, as a function of the characteristics of $Y \rightarrow Z$ alone. Previously, a bound has been obtained by Evans and Schulman \cite{evans1999signal} for $Y \rightarrow Z$ being a BSC with similar earlier results found in the more mathematically motivated work of Ahlswede and G\'{a}cs \cite{ahlswede1976spreading}. Theorem \ref{theo:1} in our paper generalizes these results to a general $n$-input, $m$-output noisy channel. 
\begin{theorem}
\label{theo:1}
\textbf{Strong Data Processing Inequality (SDPI)} X, Y and Z are categorical random variables following a Markov chain $X\rightarrow Y \rightarrow Z$, where $Y \rightarrow Z$ is an n-input, m-output noisy channel with a transition matrix $\mathbf{A}=(a_{i,j})$. Then
\begin{equation}
\label{eqn:SDPI}
\frac{I(X;Z)}{I(X;Y)}\le 1 - \min_{(k,\ell),k\ne \ell}\left(\sum_{j=1}^{m}\sqrt{a_{k,j} a_{\ell,j}}\right)^2.
\end{equation}
\end{theorem}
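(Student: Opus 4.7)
Set $c=\min_{k\ne\ell}\sum_{j=1}^{m}\sqrt{a_{k,j}a_{\ell,j}}$, the smallest Bhattacharyya affinity between rows of $\mathbf{A}$; the target is the contraction bound $I(X;Z)\le(1-c^2)\,I(X;Y)$. Using $I(X;Y)=\mathbb{E}_X[D(P_{Y|X}\|P_Y)]$ together with the Markov structure $X\to Y\to Z$, which gives $P_{Z|X=x}=P_{Y|X=x}\mathbf{A}$ and $P_Z=P_Y\mathbf{A}$, the claim reduces to the pointwise KL-divergence contraction
\begin{equation*}
D(P\mathbf{A}\,\|\,Q\mathbf{A})\le (1-c^2)\,D(P\,\|\,Q)
\end{equation*}
holding for every pair of distributions $P,Q$ on $\{1,\dots,n\}$; averaging over the realizations of $X$ then recovers the theorem. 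This recasts the problem as a uniform upper bound on the KL-contraction coefficient of $\mathbf{A}$ in terms of the minimum pairwise row affinity.

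To establish the pointwise contraction, I would exploit the fact that, by joint convexity of KL divergence and linearity of $P\mapsto P\mathbf{A}$, the worst-case ratio $D(P\mathbf{A}\|Q\mathbf{A})/D(P\|Q)$ is approached, in a limiting sense, by pairs $(P,Q)$ supported on only two input letters $\{k,\ell\}$. This reduces the problem to the 2-input, $m$-output sub-channel with rows $\mathbf{A}_k,\mathbf{A}_\ell$ and to proving
\begin{equation*}
D\bigl(\lambda\mathbf{A}_k+(1-\lambda)\mathbf{A}_\ell\,\big\|\,\mu\mathbf{A}_k+(1-\mu)\mathbf{A}_\ell\bigr)\le (1-\rho_{k\ell}^2)\,D(\lambda\|\mu)
\end{equation*}
for every $\lambda,\mu\in[0,1]$, where $\rho_{k\ell}=\sum_j\sqrt{a_{k,j}a_{\ell,j}}$ and $D(\lambda\|\mu)$ is the binary KL divergence. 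Uniformizing over pairs $(k,\ell)$ then replaces $\rho_{k\ell}$ by $c$ and yields the theorem.

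The main obstacle is the binary-input bound in the second display. The Bhattacharyya affinity is the R\'enyi-$\tfrac12$ affinity $\rho_{k\ell}=\exp(-\tfrac12 D_{1/2}(\mathbf{A}_k\|\mathbf{A}_\ell))$ rather than a KL-native quantity, so the factor $\rho_{k\ell}^2$ does not appear spontaneously inside $D(P\mathbf{A}\|Q\mathbf{A})$. I expect to route through the squared Hellinger distance $H^2(\mathbf{A}_k,\mathbf{A}_\ell)=1-\rho_{k\ell}$, whose contraction behaviour under $\mathbf{A}$ is explicit since Hellinger is an $f$-divergence with tractable quadratic structure, and then to lift the resulting bound to KL via a convexity argument; alternatively, a more direct route would apply the AM--GM inequality to each log-ratio $\log\bigl((\lambda a_{k,j}+(1-\lambda)a_{\ell,j})/(\mu a_{k,j}+(1-\mu)a_{\ell,j})\bigr)$ so as to insert $\sqrt{a_{k,j}a_{\ell,j}}$ into the summand and thereby quantify the Jensen slack in the log-sum inequality that already yields the plain DPI. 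Either way, the difficulty lies in bookkeeping the cross terms carefully enough that the final bound collects exactly $\rho_{k\ell}^2$ and not a weaker quantity such as $(1-\mathrm{TV}(\mathbf{A}_k,\mathbf{A}_\ell))^2$.
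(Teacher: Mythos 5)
Your high-level plan is a legitimate alternative route --- it is essentially the path through the KL contraction coefficient that the paper's ``Note added'' attributes to Ordentlich--Polyanskiy --- but as written it has two genuine gaps, and you have only sketched, not proved, the key inequality. First, the reduction of $\sup_{P,Q} D(P\mathbf{A}\|Q\mathbf{A})/D(P\|Q)$ to pairs $(P,Q)$ supported on two input letters does not follow from ``joint convexity of KL and linearity of $P\mapsto P\mathbf{A}$'': a ratio of two jointly convex functionals is not in general extremized on faces of the simplex, and both numerator and denominator degenerate as you push toward vertices. This reduction is in fact a nontrivial theorem in its own right (it is precisely the content of the recent work the authors cite, and closely related to an old unproven conjecture of Evans that general $n$-input channels are ``governed by'' their best $2$-input subchannel), so asserting it with a one-line convexity appeal is a real hole. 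Second, even granting the reduction, the binary-input inequality $D(\lambda\mathbf{A}_k+(1-\lambda)\mathbf{A}_\ell\,\|\,\mu\mathbf{A}_k+(1-\\mu)\mathbf{A}_\ell)\le(1-\rho_{k\ell}^2)\,d(\lambda\|\mu)$ is exactly the technical heart of the theorem, and you explicitly leave it as an ``obstacle'' with two candidate strategies (Hellinger routing, AM--GM on log-ratios) rather than a proof; neither sketch shows how the factor $\rho_{k\ell}^2$, as opposed to some weaker constant, actually emerges.

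For contrast, the paper avoids both difficulties: it never reduces to binary inputs. Instead it writes $I(X;Z)$ and $I(X;Y)$ as Jensen gaps of the concave functions $f(\boldsymbol{p})=H(\boldsymbol{p}\mathbf{A})$ and $g(\boldsymbol{p})=H(\boldsymbol{p})$, uses Lemma~1 to localize the supremum of their ratio to a ratio of Hessian quadratic forms $\boldsymbol{c}^{\intercal}\mathbf{H}_f\boldsymbol{c}/\boldsymbol{c}^{\intercal}\mathbf{H}_g\boldsymbol{c}$, and then runs an induction on the input alphabet size $n$: the base case $n=2$ is handled by an algebraic identity plus Cauchy--Schwarz (which is where $\left(\sum_j\sqrt{a_{k,j}a_{\ell,j}}\right)^2$ appears), and the inductive step decomposes $Q_g=Q_f+\sum_{s<t}Q_{s,t}\sum_j a_{s,j}a_{t,j}/(\sum_i p_i a_{i,j})$ with each $Q_{s,t}$ a perfect square, so the pairwise affinities enter for \emph{all} pairs of rows simultaneously without ever claiming the supremum collapses to a two-letter subchannel. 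If you want to pursue your route, you must either prove the binary-input reduction (a substantial theorem) or abandon it and bound the full $n$-input ratio directly; and in either case you must actually carry out the binary-input Cauchy--Schwarz computation rather than gesture at it.
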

In order to prove the above theorem, we begin by introducing a lemma for generic concave functions admitting second derivatives. This lemma relates the maximum ratio of function evaluation to their second derivatives or Hessian matrices.
\begin{customlemma}{1}\label{lemma:1}
Functions $f$ and $g$ are concave functions that have second derivatives in the interior of $n$-dimensional probability simplex $\mathring{\mathcal{P}}_n=\left\{(x_0,...,x_n)\in {\rm I\!R}^{n+1}\;|\;\sum_{0\le i\le n} x_i=1, x_i>0\right\}$. $\forall \{ \boldsymbol{x}_i\}_{1\le i \le d}\in\mathring{\mathcal{P}}^d_n$ and $\sum_{i=1}^{d}p_i=1, p_i\ge0$ for any integer $d\ge 2$. Define functions
\begin{align}
f_2(\boldsymbol{x}_1,...,\boldsymbol{x}_d;p_1,...,p_d) = \sum_{i=1}^{d}p_i f(\boldsymbol{x}_i)-f(\sum_{i=1}^{d}p_i\boldsymbol{x}_i),
\\
g_2(\boldsymbol{x}_1,...,\boldsymbol{x}_d;p_1,...,p_d) = \sum_{i=1}^{d}p_i g(\boldsymbol{x}_i)-g(\sum_{i=1}^{d}p_i\boldsymbol{x}_i).
\end{align}
Then
\begin{equation}
\label{eqn:1}
\begin{aligned}
&\sup_{\substack{\forall (\boldsymbol{x}_1,...,\boldsymbol{x}_d)\in \mathring{\mathcal{P}}_{n}^d\\
\forall(p_1,...,p_d)\in\mathcal{P}_{d-1}}}
\frac{f_2(\boldsymbol{x}_1,...,\boldsymbol{x}_d;p_1,...,p_d)}{g_2(\boldsymbol{x}_1,...,\boldsymbol{x}_d;p_1,...,p_d)}
\\
&=\sup_{\substack{\forall \boldsymbol{c}\in {\rm I\!R}^{n}\\
\forall \boldsymbol{x}\in \mathring{\mathcal{P}}_n}}
\frac{\boldsymbol{c}^{\intercal}\mathbf{H}_{f}(\boldsymbol{x})\boldsymbol{c}}{\boldsymbol{c}^{\intercal}\mathbf{H}_{g}(\boldsymbol{x})\boldsymbol{c}},
\end{aligned}
\end{equation}
where $\mathbf{H}_{f}(\boldsymbol{x})$ and $\mathbf{H}_{g}(\boldsymbol{x})$ are Hessian matrices of functions $f$ and $g$ evaluated at  $\boldsymbol{x}$.
\end{customlemma}
We leave the proof of Lemma \ref{lemma:1} in Appendix~\ref{App:lemma1} so that we can focus on the main proof of Theorem \ref{theo:1}. The main idea of this proof is to turn the problem of upper bounding the ratio of mutual information into a problem of upper bounding the ratio of two quadratic forms obtained using Lemma \ref{lemma:1}. Then we finish the proof with a mathematical induction on the number of input letter alphabet $n$. The proof is rather technical and some readers may wish to skip to the end of it.
\begin{proof}
Recall that our goal is to find an upper bound strictly smaller than $1$ for the ratio $I(X;Z)/I(X;Y)$ with $Y\rightarrow Z$ being an $n$-input, $m$-output noisy channel.


As we will deal with $n$-element categorical random variables, it is convenient to introduce the discrete entropy function
\begin{equation}
g(p_1,...,p_n) = -\sum_{i=1}^{n} p_i\log(p_i),
\end{equation}
which is a function that maps a probability vector in the $n$-dimensional probability simplex $(p_1,...,p_n)\in\mathcal{P}_{n-1}$ to a positive real number.
$g$ is concave and admits well-defined second derivatives in the interior of $\mathcal{P}_{n-1}$: $\mathring{\mathcal{P}}_{n-1}$ where $p_i>0$, $\forall i$. By staying in the interior of the probability simplex, we only need to deal with non-degenerate cases (the true $n$-element cases). Degenerate cases where some probabilities are zero can be treated as cases with smaller $n$, which is handled in a mathematical induction over $n$.
For convenience, we also define the function
\begin{align}
f(p_1,...,p_n)&=g\left(\left(p_1,...,p_n\right)\mathbf{A}\right), 
\end{align}
where $\mathbf{A}$ is the $n$ by $m$ transition matrix of the noisy channel $Y\rightarrow Z$. As $g$ is strongly concave and concavity is preserved by a linear transformation of its argument (the negativeness of its Hessian is preserved), hence $f$ is also a concave function and admits second derivatives in $\mathring{\mathcal{P}}_{n-1}$.

It can be shown that $I(X;Z)=-f_2(p_{Y|X};p_{X})$ and $I(X;Y)=-g_2(p_{Y|X};p_{X})$, with $f_2$ and $g_2$ defined in Lemma \ref{lemma:1} and the above definitions of $f$ and $g$. We give a proof for the first identity and the other one can be proven similarly.
\begin{align}
&I(X;Z)= H(Z)-H(Z|X)\nonumber\\
&=g(p_Y\mathbf{A})-\sum_{x}p_{X}(x)H(Z|X=x) \nonumber\\
&=g\left(\sum_{X=x}p_{X}\left(x\right)\left[p_{Y|X=x}\left(y\right)\mathbf{A}\right]\right)
\nonumber
\\
&
\qquad \qquad \qquad
-\sum_{X=x}p_{X}\left(x\right)g\left(p_{Y|X=x}\left(y\right)\mathbf{A}\right)\nonumber \\
&=-f_2(p_{Y|X};p_{X}).
\end{align}
Functions $f$ and $g$ are concave and admit second derivatives in $\mathring{\mathcal{P}}_{n-1}$. We can therefore apply Lemma \ref{lemma:1} with $f,g$ and $f_2,g_2$ defined above:
\begin{equation}\label{eqn:main}
\frac{I(X;Z)}{I(X;Y)}=\frac{f_2(p_{Y|X};p_{X})}{g_2(p_{Y|X};p_{X})}\le
\sup_{\substack{\forall \boldsymbol{c}\in {\rm I\!R}^{n-1}\\
\forall \boldsymbol{p}\in \mathring{\mathcal{P}}_{n-1}}}
\frac{\boldsymbol{c}^{\intercal}\mathbf{H}_{f}(\boldsymbol{p})\boldsymbol{c}}{\boldsymbol{c}^{\intercal}\mathbf{H}_{g}(\boldsymbol{p})\boldsymbol{c}}.
\end{equation}
Here, we have successfully transformed the original problem into upper bounding the ratio of two quadratic forms defined by Hessian matrices of $f$ and $g$. 
Observe that the right hand side term does not depend on the exact dimension nor the probability distribution of $X$, removing lots of complexity and implying that $X$ can be arbitrarily sized. 
By extending Lemma \ref{lemma:1} to the measure-theoretic form, it can be shown that the result holds even for $X$ being a continuous random variable, only the discrete case is needed for the scope of this paper. 
We proceed to calculate the $(n-1)\times(n-1)$ Hessian matrices $\mathbf{H}_{f}(\boldsymbol{p})$ and $\mathbf{H}_{g}(\boldsymbol{p})$ evaluated for a probability vector $\boldsymbol{p}\in \mathring{\mathcal{P}}_{n-1}$.
We first evaluate $\mathbf{H}_{g}(\boldsymbol{p})$,
with $p_n=1-\sum_{i=1}^{n-1}p_i$ and we treat $(p_1,...,p_{n-1})$ as independent variables.
Taking second derivatives, we have $\partial^2 g/\partial p_i \partial p_j = -1/p_n$ for $\forall (i,j),i\ne j$ and $\partial^2 g/\partial p_i^2 = -1/p_i-1/p_n$. Hence
\begin{equation}
\mathbf{H}_{g}(\boldsymbol{p})=-
\begin{pmatrix}
\displaystyle\frac{p_1+p_n}{p_1 p_n}  & \displaystyle\frac{1}{p_n}      &  \dots   & \displaystyle\frac{1}{p_n}  \\
\displaystyle\frac{1}{p_n}   & \displaystyle\frac{p_2+p_n}{p_1 p_n} & \ddots   & \vdots \\
\vdots  &  \ddots       & \ddots  & \displaystyle\frac{1}{p_n} \\
\displaystyle\frac{1}{p_n} & \dots & \displaystyle\frac{1}{p_n}    & \displaystyle\frac{p_{n-1}+p_n}{p_{n-1}p_n}
\end{pmatrix}.
\end{equation}
The calculation of $\mathbf{H}_{f}(\boldsymbol{p})$ is done in a similar way by remarking that
\begin{equation}
\begin{aligned}
&f(\boldsymbol{p})=g\left(\left(p_1,p_2,...,p_n\right)\mathbf{A}\right) \\
&=g\left(\sum_{i=1}^{n}p_i a_{i,1}\;,\sum_{i=1}^{n}p_i a_{i,2}\;,...\;,\sum_{i=1}^{n}p_i a_{i,m}\right) \\
&=-\sum_{j=1}^m\left(\sum_{i=1}^n p_i a_{i,j}\right)
\log\left(\sum_{i=1}^n p_i a_{i,j}\right).
\end{aligned}
\end{equation}
Additionally, with $p_n=1-\sum_{i=1}^{n-1} p_i$, we have $\sum_{i=1}^n p_i a_{i,j}=a_{n,j}+\sum_{i=1}^{n-1}p_i(a_{i,j}-a_{n,j})$.
The second derivatives of $f$ are evaluated as 
\begin{equation} \label{eqn:partial_f}
\frac{\partial^2 f}{\partial p_\ell \partial p_k}=
-\sum_{j=1}^m\frac{(a_{k,j}-a_{n,j})(a_{\ell,j}-a_{n,j})}{\sum_{i=1}^n p_i a_{i,j}},
\end{equation}
for $1\le k,\ell \le n-1$. The matrix form of $\mathbf{H}_{f}(\boldsymbol{p})$ can be found in Appendix~\ref{App:theorem1}. 

We prove the final result by an induction on $n$. The smallest $n$ we consider is $n=2$ ($n=1$ is an uninteresting degenerate case), which corresponds to a binary variable $Y$ and a $2$-input, $m$-output noisy channel. Evans has previously given a proof \cite{evans1994information} for this special case in his doctoral thesis, his proof is via a geometric argument by computing the divergence of two infinitesimally separated distributions as an Euclidean distance. Our proof is as follows.

When $n=2$, the Hessian matrices $\mathbf{H}_{f}(\boldsymbol{p})$ and $\mathbf{H}_{g}(\boldsymbol{p})$ are simple scalars and 
\begin{equation}
\begin{aligned}
\frac{I(X;Z)}{I(X;Y)}\le 
\sup_{\substack{\forall c\in {\rm I\!R}\\
\forall \boldsymbol{p}\in \mathring{\mathcal{P}}_{1}}}
\frac{c^2\mathbf{H}_{f}(\boldsymbol{p})}{c^2\mathbf{H}_{g}(\boldsymbol{p})}
=\sup_{\substack{
\forall \boldsymbol{p}\in \mathring{\mathcal{P}}_{1}}}
\sum_{j=1}^{m}\frac{p_1 p_2(a_{1,j}-a_{2,j})^2}{p_1 a_{1,j} + p_2 a_{2,j}}.
\end{aligned}
\end{equation}
It suffices to find an upper bound for $\sum_{j=1}^{m}{p_1 p_2(a_{1,j}-a_{2,j})^2}/{\left(p_1 a_{1,j} + p_2 a_{2,j}\right)}$. $\mathbf{A}$ is a transition matrix so its rows sum to $1$: $\sum_{j=1}^m a_{i,j}=1,\forall i$. Thus we have the following identities: $1=\sum_{j=1}^m (p_1 a_{2,j}+p_2 a_{1,j})=\sum_{j=1}^m (p_1 a_{1,j}+p_2 a_{2,j})$, therefore
\begin{align}
&1-\sum_{j=1}^{m}\frac{p_1 p_2(a_{1,j}-a_{2,j})^2}{p_1 a_{1,j} + p_2 a_{2,j}}
\nonumber\\
&=\sum_{j=1}^m (p_1 a_{2,j}+p_2 a_{1,j})-\sum_{j=1}^{m}\frac{p_1 p_2(a_{1,j}-a_{2,j})^2}{p_1 a_{1,j} + p_2 a_{2,j}} \nonumber\\
&=\sum_{j=1}^m \frac{p_1^2 a_{1,j}a_{2,j}+p_2^2 a_{1,j}a_{2,j}+2p_1 p_2 a_{1,j}a_{2,j}}{p_1 a_{1,j} + p_2 a_{2,j}} \nonumber \\
&=\sum_{j=1}^m \frac{a_{1,j}a_{2,j}}{p_1 a_{1,j} + p_2 a_{2,j}} \nonumber \\
&=\left[\sum_{j=1}^m \left(\sqrt{\frac{a_{1,j}a_{2,j}}{p_1 a_{1,j} + p_2 a_{2,j}}}\right)^2\right]
\left[\sum_{j=1}^m \left(\sqrt{p_1 a_{1,j}+p_2 a_{2,j}}\right)^2\right] \nonumber \\
&\ge \left(\sum_{j=1}^m \sqrt{{a_{1,j}a_{2,j}}}\right)^2.
\end{align}
We have used the Cauchy-Schwarz inequality at the last step. Rearranging terms gives 
\begin{equation}
\sum_{j=1}^{m}\frac{p_1 p_2(a_{1,j}-a_{2,j})^2}{p_1 a_{1,j} + p_2 a_{2,j}}\le 1-\left(\sum_{j=1}^m \sqrt{{a_{1,j}a_{2,j}}}\right)^2. 
\end{equation}
The theorem is proven for $n=2$.

Now suppose that Theorem \ref{theo:1} holds true up to a number of inputs $n$ to the noisy channel with $n\ge 2$. We prove our claim for $n+1$ inputs and conclude by an induction argument. First, notice that if one or more entries to the $n+1$ dimension probability $\boldsymbol{p}$ vector are zeros and we have a degenerate case, then it corresponds to a situation where the input number is strictly smaller than $n+1$, which is already handled by our inductive hypothesis. We only need to deal with non-degenerate cases where $\mathbf{H}_{f}(\boldsymbol{p})$ and $\mathbf{H}_{g}(\boldsymbol{p})$ are $n\times n$ matrices with well-defined entries.

Inequality \eqref{eqn:main} indicates that we need to find an upper bound for the ratio $\boldsymbol{c}^{\intercal}\mathbf{H}_{f}(\boldsymbol{p})\boldsymbol{c}/\boldsymbol{c}^{\intercal}\mathbf{H}_{g}(\boldsymbol{p})\boldsymbol{c}$, for notational simplicity, let $Q_f(\boldsymbol{c})$ denote the quadratic form $-\boldsymbol{c}^{\intercal}\mathbf{H}_{f}(\boldsymbol{p})\boldsymbol{c}$. Similarly, we define $Q_g(\boldsymbol{c})=-\boldsymbol{c}^{\intercal}\mathbf{H}_{g}(\boldsymbol{p})\boldsymbol{c}$. $Q_g(\boldsymbol{c})$ can be straightforwardly evaluated as 
\begin{align}\label{eqn:Q_g}
&Q_g(\boldsymbol{c}) =\sum_{s=1}^n c_s^2\frac{p_s+p_{n+1}}{p_s p_{n+1}}+\sum_{1\le s < t \le n} c_s c_t \frac{2}{p_{n+1}}  \nonumber\\
&=\sum_{s=1}^n \left(c_s^2+\sum_{t \ne s}c_s c_t\right)\frac{p_s+p_{n+1}}{p_s p_{n+1}}-\sum_{1\le s <t \le n} c_s c_t \frac{p_s + p_t}{p_s p_t}.
\end{align}
Notice that we have a series of identities obtained from the normalization properties of $\mathbf{A}$: 
\begin{align}
1&=\sum_{j=1}^m \frac{p_s a_{t,j}+p_{t} a_{s,j}}{p_s+p_{t}},\;1\le s<t\le n+1.
\end{align}
Multiply the corresponding terms in the right hand side of Equation \eqref{eqn:Q_g} by the identities (for the same $s$ and $t$ respectively) above, we can prove that
\begin{equation} \label{eqn:Q_g_int}
Q_g(\boldsymbol{c})=Q_f(\boldsymbol{c})+\sum_{1\le s<t \le n+1}Q_{s,t}(\boldsymbol{c})\sum_{j=1}^m
\frac{a_{s,j}a_{t,j}}{\sum_{i=1}^{n+1}p_i a_{i,j}},
\end{equation}
where $Q_{s,t}(\boldsymbol{c})$ is a quadratic form of $\boldsymbol{c}$. 
The details to obtain Equation \eqref{eqn:Q_g_int} can be found in Appendix~\ref{App:theorem1}.
We are going to prove that $Q_{s,t}(\boldsymbol{c})$ is a square term and thus positive.

We first investigate the terms $Q_{s,t}(\boldsymbol{c})$ for $1\le s<t \le n$, a careful bookkeeping of all the relevant terms gives
\begin{align}
Q_{s,t}(\boldsymbol{c})
=\left(\sqrt{\frac{p_t}{p_s}}c_s - \sqrt{\frac{p_s}{p_t}}c_t\right)^2 \ge 0.
\end{align}
Then we tackle the terms $Q_{s,n+1}(\boldsymbol{c})$ for $1\le s \le n$:
\begin{align}
&Q_{s,n+1}(\boldsymbol{c}) \nonumber \\
&=\left[c_s\left(\sqrt{\frac{p_s}{p_{n+1}}}+\sqrt{\frac{p_{n+1}}{p_{s}}}\right)+\sum_{t \ne s} \sqrt{\frac{p_s}{p_{n+1}}}c_t \right]^2 \ge 0.
\end{align}
And the Cauchy-Schwarz inequality gives for any $s$ and $t$
\begin{align}
&\sum_{j=1}^m
\frac{a_{s,j}a_{t,j}}{\sum_{i=1}^{n+1}p_i a_{i,j}} \nonumber \\
&=\sum_{j=1}^m
\left(\sqrt{\frac{a_{s,j}a_{t,j}}{\sum_{i=1}^{n+1}p_i a_{i,j}}}\right)^2
\sum_{j=1}^m \left(\sqrt{\sum_{i=1}^{n+1}p_i a_{i,j}}\right)^2 \nonumber \\
&\ge \left(\sum_{j=1}^m \sqrt{a_{s,j} a_{t,j}}\right)^2.
\end{align}
Since $Q_{s,t}$ and $Q_{s,n+1}$ are all positive, we apply this inequality in Equation \eqref{eqn:Q_g_int}, and 
\begin{align} \label{eqn:before_last}
&Q_g(\boldsymbol{c}) \nonumber \\
&\ge Q_f(\boldsymbol{c})+\sum_{1\le s<t \le n+1}Q_{s,t}(\boldsymbol{c})
\left(\sum_{j=1}^m \sqrt{a_{s,j} a_{t,j}}\right)^2.
\nonumber \\
&\ge Q_f(\boldsymbol{c})+
\sum_{1\le s<t \le n+1}Q_{s,t}(\boldsymbol{c}) \min_{(k,\ell),k\ne \ell} \left(\sum_{j=1}^m \sqrt{a_{k,j} a_{\ell,j}}\right)^2.
\end{align}
Next we prove an identity that $Q_g(\boldsymbol{c})=\sum_{1\le s<t \le n+1}Q_{s,t}(\boldsymbol{c})$.
Equation \eqref{eqn:Q_g_int} is an identity that holds for any $\mathbf{A}$. In particular, let us consider the case where $a_{s,j}=a_{t,j}=a_j\;\forall(s,t)$, meaning all rows of $\mathbf{A}$ are equal. It is easy to see that $Q_f(\boldsymbol{c})$ is zero in this case from Equation \eqref{eqn:partial_f}. Therefore
\begin{align}
Q_{g}(\boldsymbol{c})&=\sum_{1\le s<t \le n+1}Q_{s,t}(\boldsymbol{c})\sum_{j=1}^m
a_j
=
\sum_{1\le s<t \le n+1}Q_{s,t}(\boldsymbol{c})
\end{align}
Using this identity in Inequality \eqref{eqn:before_last} gives 
\begin{align}
Q_g(\boldsymbol{c})\ge Q_f(\boldsymbol{c}) + Q_g(\boldsymbol{c})
\min_{(k,\ell),k\ne \ell} \left(\sum_{j=1}^m \sqrt{a_{k,j} a_{\ell,j}}\right)^2.
\end{align}
$Q_g(\boldsymbol{c})=-\boldsymbol{c}^{\intercal}\mathbf{H}_{g}(\boldsymbol{p})\boldsymbol{c}$ and $\mathbf{H}_{g}(\boldsymbol{p})$ is negative definite by the strong concavity of $g$. $Q_g(\boldsymbol{c})$ is therefore positive and
\begin{equation}
\frac{\boldsymbol{c}^{\intercal}\mathbf{H}_{f}(\boldsymbol{p})\boldsymbol{c}}{\boldsymbol{c}^{\intercal}\mathbf{H}_{g}(\boldsymbol{p})\boldsymbol{c}}
=\frac{Q_f(\boldsymbol{c})}{Q_g(\boldsymbol{c})}\le 1-\min_{(k,\ell),k\ne \ell} \left(\sum_{j=1}^m \sqrt{a_{k,j} a_{\ell,j}}\right)^2.
\end{equation}
Hence, Inequality \eqref{eqn:main} gives
\begin{equation}
\frac{I(X;Z)}{I(X;Y)} \le 1-\min_{(k,l),k\ne \ell} \left(\sum_{j=1}^m \sqrt{a_{k,j} a_{\ell,j}}\right)^2.
\end{equation}
We therefore conclude the proof by induction.
\end{proof}
The results we have obtained so far are generic and next we apply it to estimate information contraction in a network composed of noisy binary neurons.
\begin{prop}
\label{prop:1}
\textbf{Independent neuron noise.}
Consider a layer composed of $n$ $\xi$-noisy binary neurons that independently fail with probability $\xi$, let $X$ be the input to the binary neural network ($X$ is random bit string of an arbitrary number of bits), $Y$ be the pre-noise output of the neurons and $Z$ be the output after noise. Then
\begin{equation}
\frac{I(X;Z)}{I(X;Y)}\le 1 - (4\xi-4\xi^2)^n.
\label{contraction_single_layer}
\end{equation}
\end{prop}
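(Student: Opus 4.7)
The plan is to apply Theorem~\ref{theo:1} directly to the channel $Y \to Z$, whose structure is completely determined by the independent bit-flip noise. The first step is to recognize $Y$ and $Z$ as categorical random variables on the alphabet $\{0,1\}^n$ of size $2^n$, so that the noisy channel $Y \to Z$ is a $2^n$-input, $2^n$-output channel with transition matrix entries
\[
a_{\mathbf{y},\mathbf{z}} \;=\; \xi^{d_H(\mathbf{y},\mathbf{z})}\,(1-\xi)^{\,n-d_H(\mathbf{y},\mathbf{z})},
\]
where $d_H(\cdot,\cdot)$ denotes Hamming distance. This product form follows from the independence of the $n$ component BSCs.

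The key quantity in Theorem~\ref{theo:1} is the Bhattacharyya coefficient between rows of $\mathbf{A}$. For two distinct bit strings $\mathbf{y}_k$ and $\mathbf{y}_\ell$ that disagree in exactly $d$ positions, the sum over output strings factorizes:
\[
\sum_{\mathbf{z}\in\{0,1\}^n}\!\!\sqrt{a_{k,\mathbf{z}}\,a_{\ell,\mathbf{z}}}
\;=\;\prod_{i=1}^{n}\sum_{z_i\in\{0,1\}}\sqrt{P(z_i\mid y_{k,i})\,P(z_i\mid y_{\ell,i})}.
\]
At each position where $y_{k,i}=y_{\ell,i}$ the inner sum equals $1$, and at each mismatched position it equals $\sqrt{\xi(1-\xi)}+\sqrt{(1-\xi)\xi}=2\sqrt{\xi(1-\xi)}$. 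Hence the full sum is $\bigl(2\sqrt{\xi(1-\xi)}\bigr)^d$, and its square is $(4\xi(1-\xi))^d$.

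Since $4\xi(1-\xi)\le 1$, this quantity is minimized over pairs $(k,\ell)$ with $k\ne\ell$ by taking $d$ as large as possible, namely $d=n$ (pick $\mathbf{y}_\ell = \overline{\mathbf{y}_k}$, the bitwise complement). Substituting into the SDPI of Theorem~\ref{theo:1} yields
\[
\frac{I(X;Z)}{I(X;Y)}\;\le\;1-\min_{k\ne\ell}\!\Bigl(\sum_{\mathbf{z}}\sqrt{a_{k,\mathbf{z}}\,a_{\ell,\mathbf{z}}}\Bigr)^{\!2}\;=\;1-\bigl(4\xi-4\xi^2\bigr)^{n},
\]
which is the claimed bound.

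There is no real obstacle beyond bookkeeping: the proof is essentially a single application of Theorem~\ref{theo:1} together with the tensor-product structure of the noise. The only mildly subtle point is correctly identifying where the minimum of the Bhattacharyya coefficient is attained, which amounts to observing that $4\xi(1-\xi)\in[0,1]$ so the worst-case row pair is the pair of antipodal bit strings.
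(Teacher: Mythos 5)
Your proof is correct and follows the same overall route as the paper: identify $Y\to Z$ as a $2^n$-input, $2^n$-output product channel and feed the minimum row-pair Bhattacharyya coefficient into Theorem~\ref{theo:1}. The one step you handle differently is the computation of that minimum. The paper observes only that every row of $\mathbf{A}$ is a permutation of the same multiset of binomial terms $\{\xi^k(1-\xi)^{n-k}\}$ and invokes the rearrangement inequality to lower-bound $\sum_j\sqrt{a_{k,j}a_{\ell,j}}$ by $(4\xi-4\xi^2)^{n/2}$ for every pair, without exhibiting the minimizing pair. You instead exploit the tensor-product structure to evaluate the coefficient exactly as $\bigl(4\xi(1-\xi)\bigr)^{d/2}$ for rows at Hamming distance $d$, and then minimize over $d$, identifying the antipodal pair as the extremizer. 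Your computation is sharper (it shows the bound of Theorem~\ref{theo:1} is attained with equality on the minimum, not merely bounded) and it does not need the assumption $\xi<1/2$ that the paper's ordering argument uses. The trade-off is that the paper's permutation-plus-rearrangement argument survives when the product structure is broken, which is exactly what it reuses for the weakly-correlated noise model of Proposition~\ref{prop:2}; your tensorization would not carry over there.
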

\begin{proof}
First, as the output $Z$ is independent of $X$ given $Y$, so we have $X\rightarrow Y \rightarrow Z$ as a Markov chain, where the mapping $Y \rightarrow Z$ is a noisy channel of $2^n$ inputs and $2^n$ outputs. The entries of transition matrix $\mathbf{A}$ are binomial terms $\xi^k(1-\xi)^{n-k}$ with $0\le k\le n$ (there are $2^n$ terms with repetitions for all possible combinations, each row of $\mathbf{A}$ is a permutation of these terms). Notice that $0\le  \xi<0.5$, thus $1-\xi>\xi$. We can rank all the binomial terms in descending order: $(1-\xi)^n>(1-\xi)^{n-1}\xi>...>\xi^{n-1}(1-\xi)>\xi^n$. Apply rearrangement inequality and monotonicity of the square root function, for $\forall(k,\ell)$, the minimum of $\sum_{j=1}^{2^n}\sqrt{a_{k,j} a_{\ell,j}}$ is achieved by arranging binomial terms of $\{a_{k,j}\}_j$ and $\{a_{\ell,j}\}_j$ in a reversed order, leading to each product $a_{k,j} a_{\ell,j}$ equaling $\xi^n(1-\xi)^n$, i.e., $\sum_{j=1}^{2^n}\sqrt{a_{k,j} a_{\ell,j}}\ge \sum_{j=1}^{2^n} \sqrt{\xi^n(1-\xi)^n}=(4\xi-4\xi^2)^{n/2}$. Last, applying Theorem \ref{theo:1} gives $I(X;Z)\le I(X;Y)[1-(4\xi-4\xi^2)^n]$. 
\end{proof}
As a sanity check, we have $1-(4\xi-4\xi^2)^n=1$ when $\xi=0$ and $1-(4\xi-4\xi^2)^n=0$ when $\xi=0.5$. These values correspond to the perfectly noiseless and the perfectly noisy cases and as one would expect, the upper bound is one in the perfectly noiseless scenario and zero in the opposite. Proposition \ref{prop:1} is generic enough such that by swapping noisy binary neurons with noisy binary gates or Bayesian dependencies, it can also be applied to noisy circuits and Bayesian networks.  Additionally, $X$ can be restricted to a subset of the network input, the proof still holds.

When $n=1$ and $X$ is a random bit, Proposition \ref{prop:1} gives the result of Theorem 1 in Evans and Schulman \cite{evans1999signal}. When $n>1$, Evans and Schulman proved a result in their paper (Lemma 2) which can be used to upper bound end-to-end information contraction. However, in the following we show that our contraction bound in Inequality~\eqref{contraction_single_layer} is strictly tighter than the one obtainable from the resutls of Evans and Schulman for $n>1$. 
In Figure \ref{fig:ES99_comparison}, we give a simple example of a network which has a binary random variable $X$ as input, followed by three $\xi$-noisy binary neurons. Let $Y$ be the pre-noise output of these neurons, $Z$ be the output after noise. Equivalently, we can consider that $Z$ is the output of $Y$ going through three independent BSCs of probability $\xi$, as illustrated in Figure \ref{fig:ES99_comparison}. Evans-Schulman method estimates the end-to-end information contraction upper bound to be $3[1-(4\xi-4\xi^2)]$ while Proposition \ref{prop:1} gives $1-(4\xi-4\xi^2)^3$. More generally speaking, Evans-Schulman method gives an upper bound of $n\eta$ and our estimate gives $[1-(1-\eta)^n]$ with $\eta = 1-(4\xi-4\xi^2)$. We have $n\eta-[1-(1-\eta)^n]=0$ while $\eta=0$, taking its derivative with respect to $\eta$ gives $n-n(1-\eta)^{n-1}$, which is positive for any $0<\eta<1$. Hence $n\eta > 1-(1-\eta)^n$ for any $0<\eta<1$, the upper bound given by our Proposition \ref{prop:1} is always tighter than the one obtained from the Evans-Schulman method. In the limit where $\eta \rightarrow 0$, the Evans-Schulman upper bound captures first order behavior of $[1-(1-\eta)^n]$.

\begin{figure}[th]
\begin{center}
\includegraphics[width=.88\linewidth]{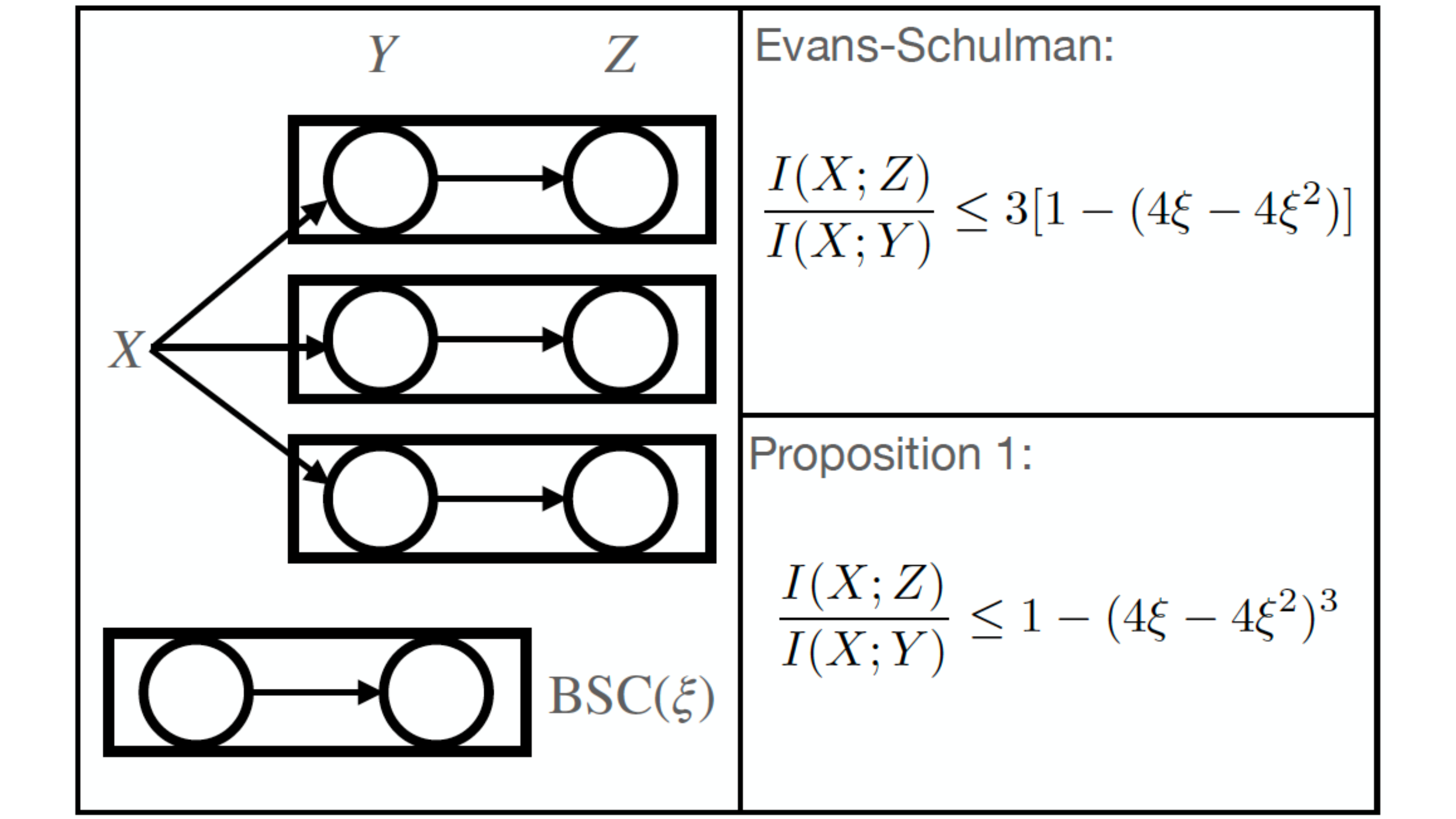}
\includegraphics[width=1.\linewidth]{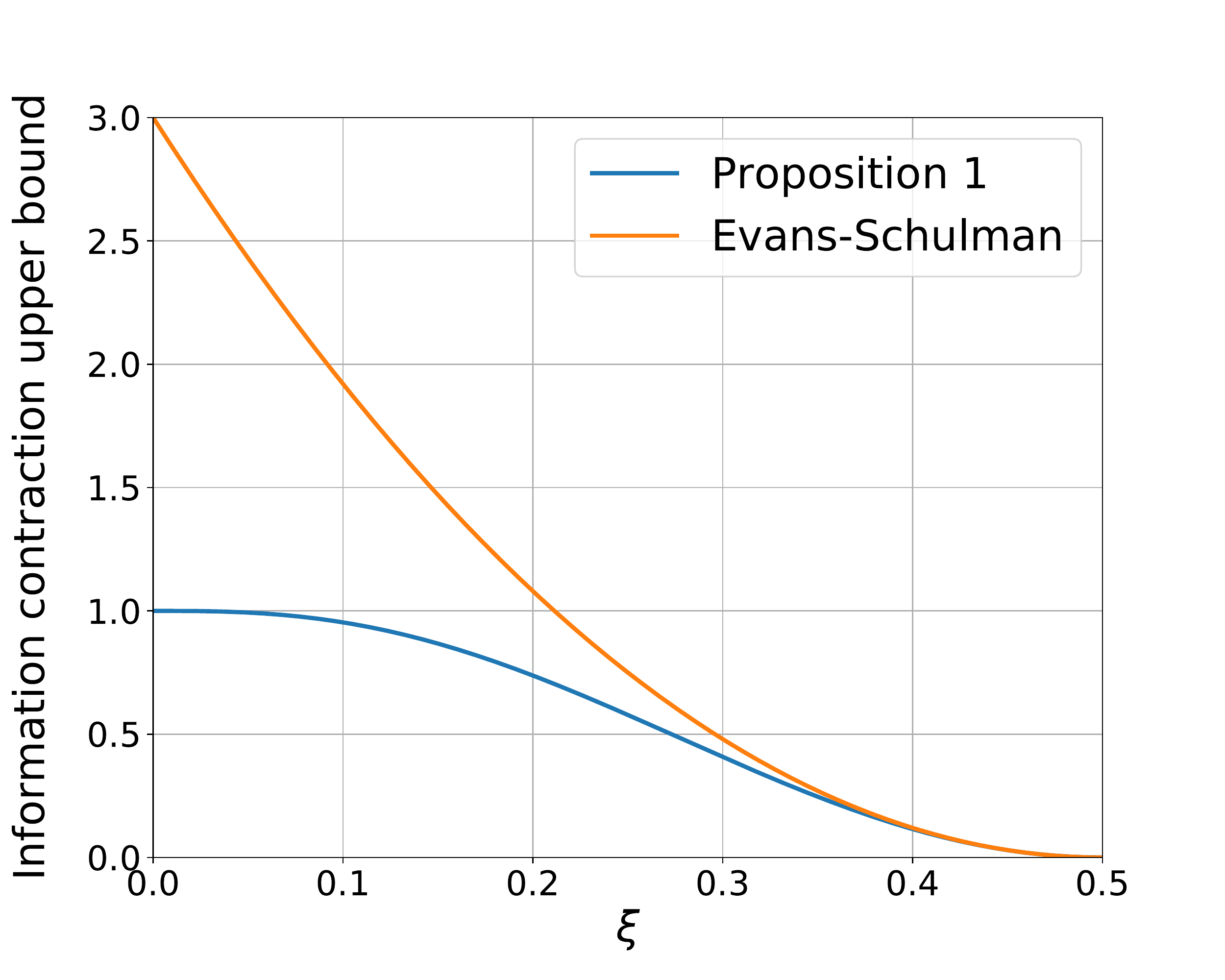}
\caption{Comparison of Evans-Schulman upper bound \cite{evans1999signal} (Lemma 2) with ours for a simple example network shown by the upper plot, the lower plot shows that our upper bound given by Proposition \ref{prop:1} is strictly tighter.}\label{fig:ES99_comparison}
\end{center}
\end{figure}

The authors would like to point out that similar results to Proposition \ref{prop:1} can be obtained from Theorem 5 of Polyanskiy and Wu \cite{polyanskiy2017strong}, which has a very different proof from ours and uses percolation theory. 
Our proof using Theorem \ref{theo:1} also has the advantage of being applicable to more general situations where noise or failure is correlated between different neurons and not independent. 
Correlated neuron noise has been observed in biological \cite{rumyantsev2020fundamental} and physical science \cite{semenova2019fundamental} contexts. Next we give some useful results derived from Theorem \ref{theo:1} taking into account the effect of weakly-correlated layer-wise neuron noise. 

The most general form of Theorem \ref{theo:1} is not analytically tractable and requires the calculation of each pair of row-wise inner products of a $2^n$ by $2^n$ matrix $\mathbf{A}$.
However, we can consider a simplified model of layer-wise noise correlation by decomposing the noise to a shared noise source and an independent noise source. 
The shared noise is characterized by the probability $\xi_1$ that all neurons in the layer fail together. 
Each neuron also has an independent noise source with an error probability of $\xi_2$. 
We consider that the two noise sources are combined in a successive way independent from one another. 
Intuitively speaking, first a biased coin is tossed to decide if all neurons of a layer will flip their output with probability $\xi_1$ and then each neuron in that layer tosses its own coin to decide if it will flip its own output again with probability $\xi_2$. 
We have the following result under these assumptions. 
\begin{prop}
\label{prop:2}
\textbf{Weakly-correlated layer-wise neuron noise.}
\newline
Consider a layer composed of $n$ noisy binary neurons that have a joint failure probability of $\xi_1$ and an independent individual failure probability of $\xi_2$ under our assumptions, the noise is weakly-correlated such that $\xi_1 \ll 1$ and $\xi_1 \ll \xi_2$. 
Let $X$ be the input to the binary neural network, $Y$ be the pre-noise output of the neurons and $Z$ be the output after noise. Then
\begin{equation} \label{eqn:eta_wc}
\frac{I(X;Z)}{I(X;Y)}\le 1 - [(4\xi_2-4\xi_2^2)^n+g(\xi_2,n)\xi_1]+\smallO(\xi_1),
\end{equation}
where 
\begin{equation} \label{eqn:g}
g(\xi_2,n)=2[\left(4\xi_2^2-4\xi_2+2\right)^n-\left(4\xi_2-4\xi_2^2\right)^n].
\end{equation}
\end{prop}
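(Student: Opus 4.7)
The plan is to identify the transition matrix of the composite channel $Y\to Z$ and then apply Theorem~\ref{theo:1} in a first-order expansion around $\xi_1=0$. Writing the channel as the mixture $\mathbf{A} = (1-\xi_1)\mathbf{A}_0 + \xi_1 \mathbf{P}\mathbf{A}_0$, where $\mathbf{A}_0$ is the independent-noise channel used in the proof of Proposition~\ref{prop:1} and $\mathbf{P}$ is the permutation complementing every bit, the entries become $a_{y,z} = (1-\xi_1)q_{y,z} + \xi_1 q_{\bar{y},z}$ with $q_{y,z} = \xi_2^{d_H(y,z)}(1-\xi_2)^{n-d_H(y,z)}$. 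The key quantity in Theorem~\ref{theo:1} is $\min_{k\ne\ell}\bigl(\sum_j\sqrt{a_{k,j}a_{\ell,j}}\bigr)^2$, which I would Taylor-expand in $\xi_1$ using
\[
\sqrt{a_{k,j}a_{\ell,j}} = \sqrt{q_{k,j}q_{\ell,j}}\Bigl(1 + \tfrac{\xi_1}{2}\bigl[q_{\bar{\ell},j}/q_{\ell,j} + q_{\bar{k},j}/q_{k,j} - 2\bigr]\Bigr) + \smallO(\xi_1).
\]

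The zeroth-order sum $\sum_j \sqrt{q_{k,j}q_{\ell,j}} = (4\xi_2(1-\xi_2))^{d/2}$ with $d = d_H(y_k,y_\ell)$ is obtained exactly as in the argument of Proposition~\ref{prop:1} and is minimized at $d=n$ (complementary pair). Since this minimum is strict in $d$ and the first-order correction is continuous in $\xi_1$, the minimizing pair remains complementary for all sufficiently small $\xi_1$, so it suffices to evaluate the correction at $d=n$. For a complementary pair I would parametrize outputs by $h = d_H(y_k,z)\in\{0,\ldots,n\}$ with multiplicity $\binom{n}{h}$ and, setting $r = \sqrt{\xi_2/(1-\xi_2)}$, collapse the first-order contribution via the binomial theorem to $2(1-\xi_2)^n r^{-n}(1+r^4)^n = 2((1-\xi_2)^2+\xi_2^2)^n/(\xi_2(1-\xi_2))^{n/2}$ after using $1+r^4 = ((1-\xi_2)^2+\xi_2^2)/(1-\xi_2)^2$.

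Squaring the resulting expansion of $\sum_j\sqrt{a_{k,j}a_{\ell,j}}$ and using $2(1-\xi_2)^2 + 2\xi_2^2 = 4\xi_2^2 - 4\xi_2 + 2$ collapses the algebra to
\[
\Bigl(\sum_j\sqrt{a_{k,j}a_{\ell,j}}\Bigr)^2 = (4\xi_2-4\xi_2^2)^n + g(\xi_2,n)\xi_1 + \smallO(\xi_1),
\]
which, when inserted in Theorem~\ref{theo:1}, yields Equation~\eqref{eqn:eta_wc}. The main obstacle I anticipate is the combinatorial bookkeeping in the first-order correction: expressing the ratios $q_{\bar{y},z}/q_{y,z}$ in powers of $r$ so as to collapse the sum over $2^n$ outputs via a single application of the binomial theorem. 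A secondary subtlety is formally confirming that the minimum over the finite set of pairs is attained at $d=n$ throughout the expansion and that the remainder is genuinely $\smallO(\xi_1)$; this is where the assumption $\xi_1\ll\xi_2$ enters, because second-order terms in $\xi_1$ pick up inverse powers of $\xi_2$ through the ratios $q_{\bar{y},z}/q_{y,z}$ and must be dominated by the first-order correction.
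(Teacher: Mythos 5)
Your proposal is correct and follows essentially the same route as the paper: apply Theorem~\ref{theo:1} to the mixture channel $(1-\xi_1)\mathbf{A}_0+\xi_1\mathbf{P}\mathbf{A}_0$, identify the worst-case pair as the complementary (reversed-order) one by continuity from $\xi_1=0$, expand $\sum_j\sqrt{a_{k,j}a_{\ell,j}}$ to first order in $\xi_1$, and collapse the correction with the binomial theorem to $(1-\xi_1)(4\xi_2-4\xi_2^2)^{n/2}+\xi_1(2\xi_2^2-2\xi_2+1)^n/(\xi_2-\xi_2^2)^{n/2}$, which squares to Equation~\eqref{eqn:eta_wc}. The only cosmetic difference is that the paper justifies the worst-case pair via the rearrangement inequality on the multiset of entries rather than your direct Hamming-distance parametrization, and your remark that $\xi_1\ll\xi_2$ is needed to control the inverse powers of $\xi_2$ in the remainder is a point the paper leaves implicit.
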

\begin{proof}
The overall idea of this proof is similar to that of Proposition \ref{prop:1}. 
The entries of matrix $\textbf{A}$ are $\{ (1-\xi_1)(1-\xi_2)^{n-k}\xi_2^k+\xi_1 \xi_2^{n-k}(1-\xi_2)^k\}_{0\le k \le n}$ with repetitions, taking into account the extra paths of transition introduced by the shared noise source. 
When $\xi_1 \rightarrow 0$, they converge to simple binomial terms $\{\xi_2^k(1-\xi_2)^{n-k} \}_{0\le k \le n}$ which are strongly ordered as shown previously in the proof of Proposition \ref{prop:1}. 
A continuity argument with respect to $\xi_1$ gives that the ordering is preserved for small enough $\xi_1$. 
Thus rearrangement inequality still applies and each term in the reversed order product is (to the leading order in $\xi_1$)
\begin{equation}
(1-\xi_2)^n\xi_2^n - 2\xi_1 (1-\xi_2)^n\xi_2^n
+ 2 \xi_1 (1-\xi_2)^{2(n-k)}\xi_2^{2k}.
\end{equation}
And the sum is lower bounded as (to the leading order in $\xi_1$)
\begin{align}
&\sum_{j=1}^{2^n}\sqrt{a_{k,j} a_{\ell,j}}
\ge
(1-\xi_1)(4\xi_2-4\xi_2^2)^{n/2}
+ \nonumber\\
&\frac{\xi_1}{(\xi_2-\xi_2^2)^{n/2}} \sum_{k=0}^n {n \choose k}[(1-\xi_2)^2]^{n-k}(\xi_2^2)^k \nonumber \\
&=(1-\xi_1)(4\xi_2-4\xi_2^2)^{n/2}
+ \frac{\xi_1}{(\xi_2-\xi_2^2)^{n/2}}
[2\xi_2^2 - 2\xi_2 + 1]^n.
\end{align}
Applying Theorem \ref{theo:1} with the above inequality proves the proposition.
\end{proof}
The information contraction upper bound in Proposition \ref{prop:2} can be compared with the bound in Proposition \ref{prop:1}. We have the following corollary comparing their magnitudes. 
\begin{corollary} \label{coro:1bis}
At the same per neuron noise level between the case of independent noise and the case of weakly-correlated noise, meaning $\xi_1 (1-\xi_2)+(1-\xi_1)\xi_2 = \xi$, 
let $\eta_\mathrm{ind}(\xi)$ denote the information contraction upper bound for independent neuron noise in Proposition \ref{prop:1} and let $\eta_\mathrm{wc}(\xi_1, \xi_2)$ denote the information contraction upper bound for weakly-correlated neuron noise in Proposition \ref{prop:2}. In the weakly-correlated noise limit where $\xi_1 \ll 1$ and $\xi_1 \ll \xi_2$, we have
\begin{equation}
    \eta_\mathrm{wc}(\xi_1, \xi_2)
\le
\eta_\mathrm{ind}(\xi).
\end{equation}
\end{corollary}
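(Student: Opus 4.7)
The plan is to form the difference $\eta_\mathrm{ind}(\xi) - \eta_\mathrm{wc}(\xi_1,\xi_2)$ and show it is non-negative to first order in $\xi_1$, which is the relevant order since $\eta_\mathrm{wc}$ is itself only specified up to $o(\xi_1)$. The constraint $\xi_1(1-\xi_2) + (1-\xi_1)\xi_2 = \xi$ rearranges to $\xi = \xi_2 + \xi_1(1 - 2\xi_2)$, so I would Taylor expand $(4\xi - 4\xi^2)^n$ in $\xi_1$ around $\xi_1 = 0$ (where $\xi = \xi_2$). A quick chain-rule computation gives
\begin{equation*}
\left.\frac{d}{d\xi_1}(4\xi - 4\xi^2)^n\right|_{\xi_1 = 0} = 4n(1 - 2\xi_2)^2\,(4\xi_2 - 4\xi_2^2)^{n-1},
\end{equation*}
and the zeroth-order terms of the two bounds cancel, reducing the corollary to showing
\begin{equation*}
g(\xi_2, n) - 4n(1 - 2\xi_2)^2\,(4\xi_2 - 4\xi_2^2)^{n-1} \;\ge\; 0.
\end{equation*}

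Introducing the shorthand $u := 4\xi_2(1-\xi_2) \in [0,1]$, one checks that $4\xi_2^2 - 4\xi_2 + 2 = 2 - u$ and $(1 - 2\xi_2)^2 = 1 - u$. Substituting these into the definition of $g(\xi_2,n)$ from Equation~\eqref{eqn:g} collapses the remaining inequality to the scalar statement
\begin{equation*}
(2-u)^n - u^n \;\ge\; 2n(1-u)\,u^{n-1}, \qquad u \in [0,1].
\end{equation*}

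To prove this I would substitute $t = 1 - u \in [0,1]$, turning the claim into $(1+t)^n - (1-t)^n \ge 2nt(1-t)^{n-1}$. The key identity is the binomial expansion $(1+t)^n = ((1-t) + 2t)^n = \sum_{k=0}^n \binom{n}{k}(1-t)^{n-k}(2t)^k$, which yields
\begin{equation*}
(1+t)^n - (1-t)^n = 2nt(1-t)^{n-1} + \sum_{k=2}^n \binom{n}{k}(1-t)^{n-k}(2t)^k,
\end{equation*}
and the trailing sum is manifestly non-negative on $[0,1]$. The main obstacle is really just the bookkeeping: keeping the Taylor expansion consistent with the $o(\xi_1)$ slack of Proposition~\ref{prop:2}, and spotting the substitution $u = 4\xi_2(1-\xi_2)$, $t = (1-2\xi_2)^2$ that exposes the clean binomial structure. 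Once these are in place, the corollary follows from the elementary identity above.
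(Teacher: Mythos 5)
Your proposal is correct and follows the same overall strategy as the paper: expand $\eta_\mathrm{ind}(\xi)$ to first order in $\xi_1$ using $\xi = \xi_2 + \xi_1(1-2\xi_2)$, observe that the zeroth-order terms cancel, and reduce the corollary to the scalar inequality $g(\xi_2,n) \ge 4n(1-2\xi_2)^2(4\xi_2-4\xi_2^2)^{n-1}$. The only place you diverge is in how that last inequality is established. The paper factors $g(\xi_2,n) = 2(a^n-b^n)$ with $a = 4\xi_2^2-4\xi_2+2$, $b=4\xi_2-4\xi_2^2$ via the telescoping identity $a^n-b^n=(a-b)\sum_{i=1}^n a^{n-i}b^{i-1}$ and bounds each summand below by $b^{n-1}$ using $a\ge b$; you instead substitute $u = 4\xi_2(1-\xi_2)$, $t=1-u$, and read the claim off the binomial expansion of $(1+t)^n = ((1-t)+2t)^n$, where the $k=0$ and $k=1$ terms reproduce exactly $(1-t)^n$ and $2nt(1-t)^{n-1}$ and the remaining terms are nonnegative. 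Both arguments are elementary and equally valid; yours makes the slack explicit (it is the tail $\sum_{k\ge 2}\binom{n}{k}(1-t)^{n-k}(2t)^k$), while the paper's factorization keeps everything in the original variables. No gap either way.
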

\begin{proof}
Plug the expression $\xi_1 (1-\xi_2)+(1-\xi_1)\xi_2 = \xi$ into Proposition \ref{prop:1} and expand $(4\xi-4\xi^2)^n$ to leading order in $\xi_1$. We have 
\begin{equation} \label{eqn:eta_ind}
    \eta_\mathrm{ind}(\xi) = 1 - [(4\xi_2-4\xi_2^2)^n+\tilde{g}(\xi_2,n)\xi_1]+\smallO(\xi_1),
\end{equation}
with $\tilde{g}(\xi_2,n)=4n(2\xi_2-1)^2(4\xi_2 - 4\xi_2^2)^{n-1}$. 
Function $g(\xi_2,n)$ can be factorized as
$g(\xi_2,n) = 4(4\xi_2^2-4\xi_2+1)[\sum_{i=1}^{n} (4\xi_2^2-4\xi_2+2)^{n-i}(4\xi_2 - 4\xi_2^2)^{i-1}]$ for $n\ge 1$.
Since $4\xi_2^2-4\xi_2+2 \ge 4\xi_2 - 4\xi_2^2$ for $0\le \xi_2 \le 0.5$, thus we have
$g(\xi_2,n)\ge4(2\xi_2-1)^2 n (4\xi_2 - 4\xi_2^2)^{n-1}=\tilde{g}(\xi_2,n)$ and the corollary is proven by comparing Equation \eqref{eqn:eta_ind} with Equation \eqref{eqn:eta_wc}.
\end{proof}

\begin{figure}[t]
\begin{center}
\includegraphics[width=1.\linewidth]{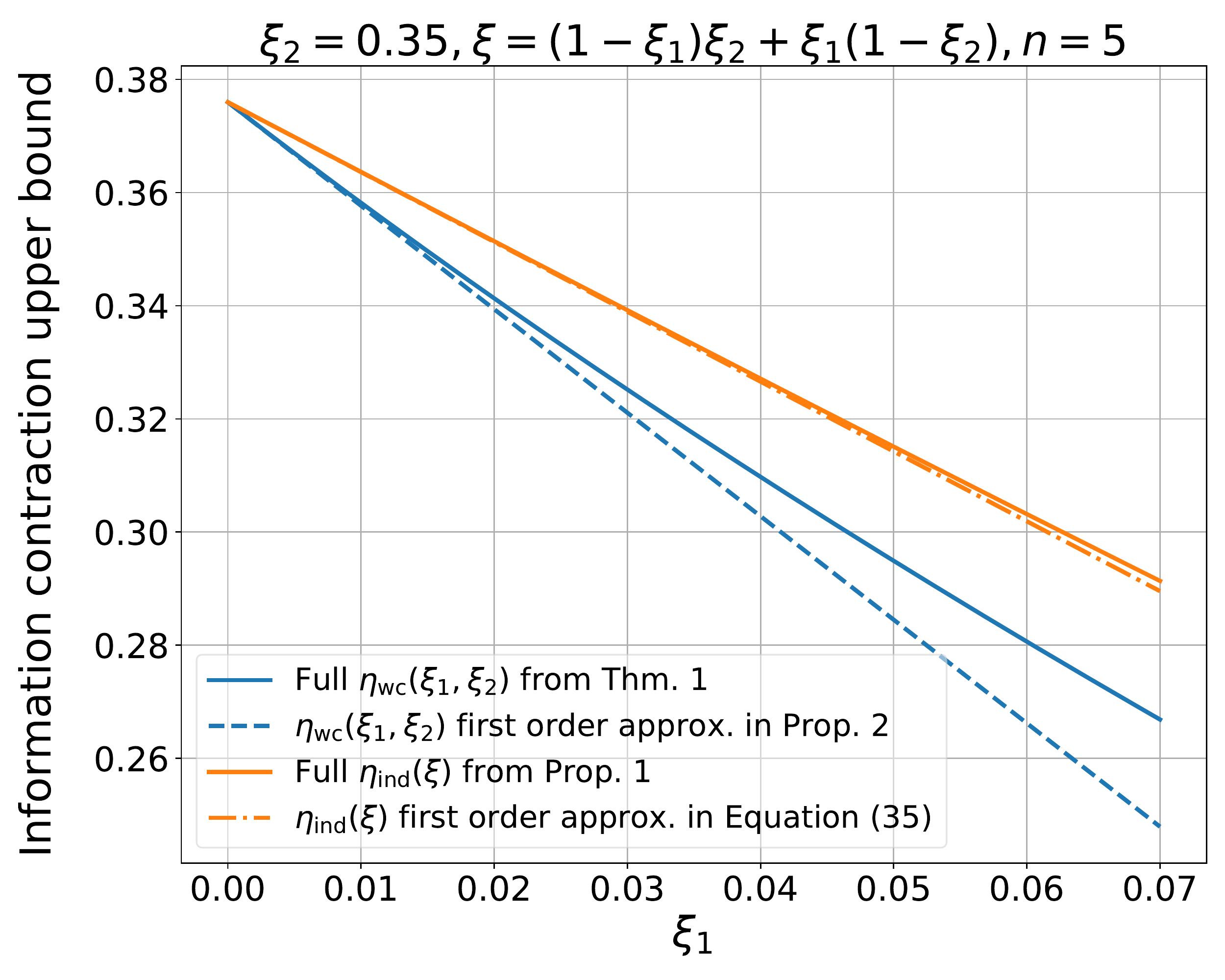}
\caption{A numerical comparison of the information contraction upper bound for independent noise and weakly-correlated layer-wise neuron noise with $\xi_2=0.35$ and $n=5$. Weakly-correlated noise results in a lower contraction ratio upper bound than independent noise as stated in Corollary \ref{coro:1bis}.}\label{fig:correlated}
\end{center}
\end{figure}
Here we give a numerical example to illustrate these results. 
Figure \ref{fig:correlated} shows a comparison between the case of independent noise and the case of weakly-correlated noise for a layer of $n=5$ neurons and $\xi_2=0.35$. 
It is numerically verifiable that the ordering of terms is preserved for $\xi_1$ up to $0.07$ and $\eta_\mathrm{wc}$ can be evaluated directly from Theorem \ref{theo:1} by calculating the reversed-order sum of products. 
Our leading order result in Proposition \ref{prop:2} gives a good approximation in the small $\xi_1$ limit. 
Weakly-correlated neuron noise leads to a lower end-to-end information contraction upper bound as stipulated by Corollary \ref{coro:1bis}.

\section{Application to noisy neural networks}
\label{Sec:NN}

In this section, we apply the SDPI results obtained in the previous section to study the end-to-end information contraction in a neural network. 
We will only have formal results for the case of independent noise in the rest of this paper, whenever we say that a component is $\xi$-noisy, it implies that its failure is independent.
Theorem \ref{thm:2} gives the main result by successively applying Proposition \ref{prop:1} to each layer.

\begin{figure}[t]
\begin{center}
\includegraphics[width=1.\linewidth]{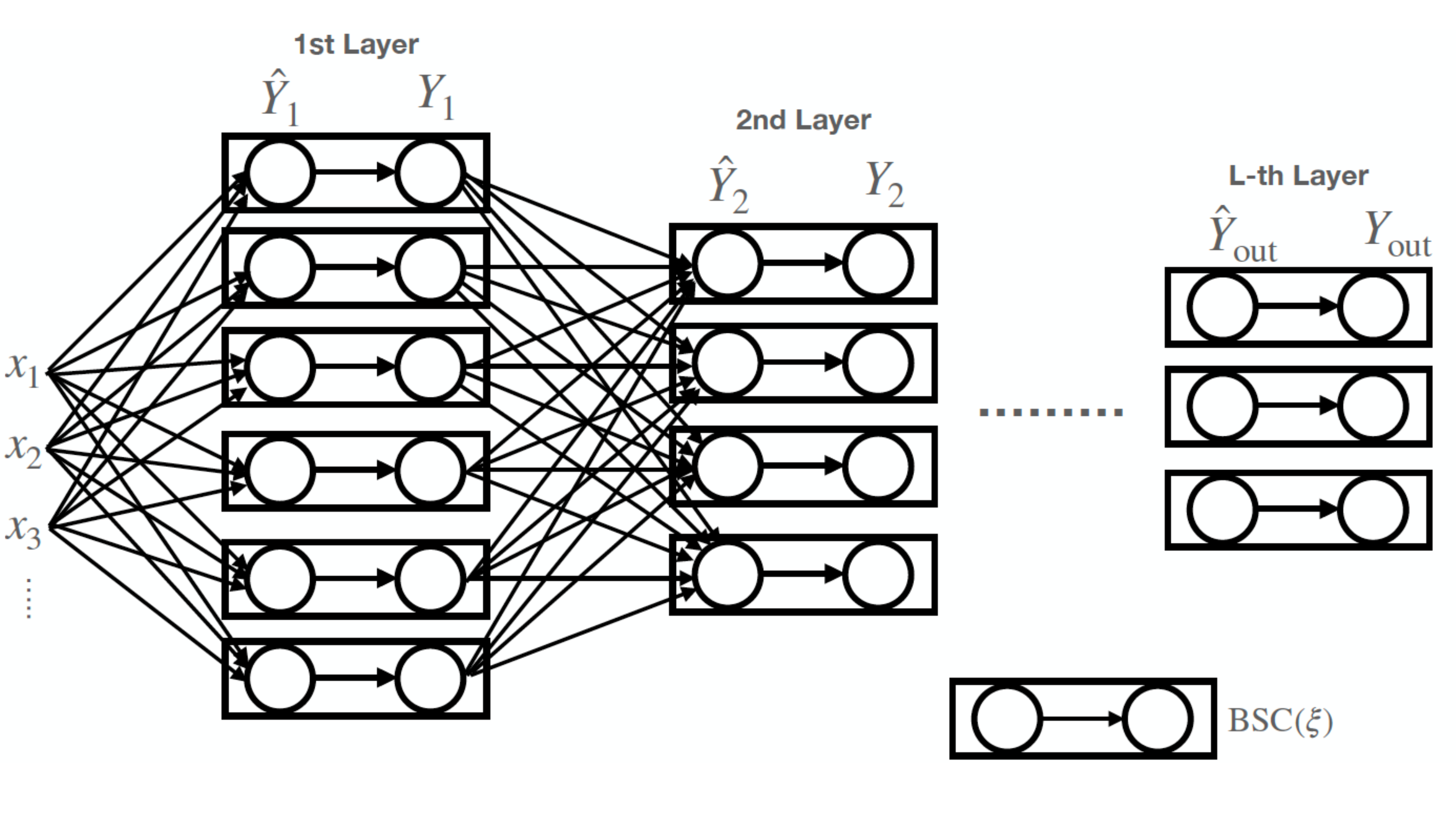}
\caption{Schematic of a noisy neural network with $\xi$-noisy binary neurons. }\label{fig:NoisyNN_schematic}
\end{center}
\end{figure}

\begin{theorem}
\label{thm:2}
\textbf{Information decay in a noisy neural network}
\newline
In a feed-forward simply layered fully-connected binary neural network composed of $L$ layers of $\xi$-noisy neurons, let $X$ be any subset of input to the neural network and $Y_\mathrm{out}$ be the output, then
\begin{equation}
I(X;Y_\mathrm{out})\le \prod_{\ell=1}^{L}\left[1-(4\xi-4\xi^2)^{n_{\ell}}\right]H(X),
\end{equation}
\end{theorem}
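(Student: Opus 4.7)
The plan is to iterate Proposition \ref{prop:1} layer by layer, gluing the resulting one-step contractions together with the ordinary data processing inequality. I would write $Z_0$ for the input bit string to the network (so that $X$ is a deterministic function of $Z_0$), and for $1 \le \ell \le L$ let $Y_\ell$ denote the pre-noise output of the $n_\ell$ binary neurons in layer $\ell$ and $Z_\ell$ their post-noise output, so that $Y_{\mathrm{out}} = Z_L$. Because the network is feed-forward and simply layered, the only dependence of $Y_\ell$ on anything preceding layer $\ell$ is through $Z_{\ell-1}$, and the noise acting on layer $\ell$ is independent of everything computed so far.

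The first step is to formalize two Markov chain facts. First, $X \to Z_{\ell-1} \to Y_\ell$ is a Markov chain, since $Y_\ell$ is a deterministic (noiseless) function of $Z_{\ell-1}$; the ordinary DPI then gives $I(X; Y_\ell) \le I(X; Z_{\ell-1})$. Second, $X \to Y_\ell \to Z_\ell$ is a Markov chain, since the layer noise is independent of the past; here Proposition \ref{prop:1} applies directly with input side $X$, pre-noise $Y_\ell$, and post-noise $Z_\ell$, yielding
\begin{equation*}
I(X; Z_\ell) \;\le\; \bigl[1-(4\xi-4\xi^2)^{n_\ell}\bigr]\, I(X; Y_\ell).
\end{equation*}
Composing these two inequalities produces the per-layer recursion $I(X; Z_\ell) \le [1-(4\xi-4\xi^2)^{n_\ell}]\, I(X; Z_{\ell-1})$.

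The second step is to telescope this recursion from $\ell = 1$ through $\ell = L$, giving
\begin{equation*}
I(X; Y_{\mathrm{out}}) \;=\; I(X; Z_L) \;\le\; \prod_{\ell=1}^{L} \bigl[1-(4\xi-4\xi^2)^{n_\ell}\bigr]\, I(X; Z_0).
\end{equation*}
Finally, $I(X; Z_0) \le H(X)$ by the standard bound $I(X;Z_0) = H(X) - H(X \mid Z_0) \le H(X)$, which is actually an equality here because $X$ is a subset of the input $Z_0$ and thus $H(X\mid Z_0)=0$. Substituting yields the claimed bound.

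There is no real obstacle beyond careful bookkeeping: the two conditional-independence facts must be checked precisely so that Proposition \ref{prop:1} and the ordinary DPI are each applied in a valid Markov-chain configuration, and $X$ must be treated as an arbitrary subset of the input (a freedom that Proposition \ref{prop:1} explicitly permits, since its SDPI bound is independent of the distribution or dimension of the source variable on the left of the chain).
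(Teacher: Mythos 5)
Your proposal is correct and is essentially the paper's own argument: the paper proves the bound by induction on $L$, which is just a formal packaging of your telescoped per-layer recursion, applying Proposition \ref{prop:1} to each layer's noise channel and the ordinary DPI to the deterministic pre-noise computation, then finishing with $I(X;Z_0)\le H(X)$. The two conditional-independence checks you flag are exactly the Markov-chain structure the paper invokes, so there is nothing to add.
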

where $n_\ell$ is the number of neurons in layer $\ell$. 
\begin{proof}
We prove the theorem by a mathematical induction on the number of layers $L$. Figure \ref{fig:NoisyNN_schematic} gives a schematic of a noisy neural network, let $Y_\ell$ denote the output activations of noisy neurons at layer $\ell$ and $\hat{Y}_\ell$ denote its pre-noise values. We have therefore a Markov chain connecting the input to the output of the network: $X\rightarrow \hat{Y}_1\rightarrow Y_1 \rightarrow \hat{Y}_2\rightarrow Y_2 \rightarrow ... \rightarrow \hat{Y}_\mathrm{L-1}\rightarrow Y_\mathrm{L-1}\rightarrow \hat{Y}_\mathrm{out}\rightarrow Y_\mathrm{out}$. When $L=1$, applying Proposition \ref{prop:1} gives $I(X;Y_\mathrm{out})\le [1-(4\xi-4\xi^2)^{n_1}]I(X;\hat{Y}_\mathrm{out})$ and by definition $I(X;\hat{Y}_\mathrm{out})\le H(X)$. Hence we have $I(X;Y_\mathrm{out})\le [1-(4\xi-4\xi^2)^{n_1}]H(X)$. Now suppose that the claim is true up to a layer number of $K$ with $K\ge 1$, then for $L=K+1$, we have $I(X;Y_\mathrm{out})\le [1-(4\xi-4\xi^2)^{n_{K+1}}]I(X;\hat{Y}_\mathrm{out})$. Since $X \rightarrow Y_K \rightarrow \hat{Y}_\mathrm{out}$, data processing inequality gives $I(X;\hat{Y}_\mathrm{out}) \le I(X;Y_K)$. Treating the first $K$ layers as a new neural network and applying the inductive hypothesis give $I(X;Y_K)\le \prod_{\ell=1}^{K}\left[1-(4\xi-4\xi^2)^{n_{\ell}}\right]H(X)$, therefore $I(X;Y_\mathrm{out})\le \prod_{\ell=1}^{K+1}\left[1-(4\xi-4\xi^2)^{n_{\ell}}\right]H(X)$ and we conclude the proof by induction.
\end{proof}

Theorem \ref{thm:2} relates the rate of information contraction in a noisy neural network to its layer width and overall depth. It implies that wider and shallower noisy neural networks are less susceptible to the decay in information, which rigorizes the intuition that a wider width provides more redundant paths for information to flow through and a deeper depth results in more irrevocable loss. More importantly, Theorem \ref{thm:2} gives an upper bound for such loss that is tighter than previous results. Applying Evans-Schulman method leads to a bound which is the sum over all possible paths from $X$ to the output and Theorem \ref{thm:2} is considerably tighter. In the more special cases where the connections are not dense in the hidden layers on paths from $X$ to the output, the upper bound in Theorem \ref{thm:2} can be further tightened by only requiring the number of neurons in each layer that are connected to $X$.

Using a noisy neural network to compute functions becomes more challenging with a rising level of noise as a result of this information contraction, and there is a point beyond which reliable computation is no longer possible. Fano's inequality gives a lower bound on the minimum mutual information required for reliable computation. The intuition is that if the output is a function of the input with a high probability then their mutual information must be high as well. The following corollary summarizes this result.
\begin{corollary}\label{coro:1}
If a feed-forward simply layered binary neural network with $L$ layers of $\xi$-noisy neurons computes a non-constant function $\delta$-reliably (that is, it produces the correct output for any network input with probability at least $1-\delta$ where $\delta<1/2$). Then
\begin{equation}\label{eqn:44}
\prod_{\ell=1}^{L-1}\left[1-(4\xi-4\xi^2)^{n_{\ell}}\right]\left[1-(4\xi-4\xi^2)\right] \ge \Delta,
\end{equation}
with
\begin{equation}\label{eqn:45}
\Delta = 1+\delta \log_2 \delta + (1-\delta) \log_2 (1-\delta).
\end{equation}
\end{corollary}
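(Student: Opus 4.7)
The plan is to sandwich $I(X;Y_{\text{out}})$ between a lower bound coming from Fano's inequality (which converts reliable computation into a lower bound on mutual information) and the upper bound provided by Theorem \ref{thm:2}, then divide through to isolate the product. I expect the main work to be in choosing the input distribution correctly and in handling the change of log base.

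First I would exploit the hypothesis that the computed Boolean function $f$ is non-constant: there must exist two inputs $x_0, x_1$ with $f(x_0) \ne f(x_1)$. I would restrict $X$ to be the sub-input taking these two values with equal probability, so that $H(X) = \log 2$ and $f(X)$ is a balanced Bernoulli random variable. Since the network computes $f$ with error probability at most $\delta$, Fano's inequality applied to the binary target $f(X)$ predicted from $Y_{\text{out}}$ will give $H(f(X) \mid Y_{\text{out}}) \le -\delta \log \delta - (1-\delta)\log(1-\delta)$ (the usual $\delta \log(|\text{range}|-1)$ term vanishes because $f(X)$ is binary). Combining this with the ordinary data processing inequality along the Markov chain $f(X) \leftarrow X \rightarrow Y_{\text{out}}$ should yield
\begin{equation*}
I(X;Y_{\text{out}}) \;\ge\; I(f(X); Y_{\text{out}}) \;\ge\; (\log 2)\, \Delta .
\end{equation*}

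For the matching upper bound I would apply Theorem \ref{thm:2} directly to this $X$, noting that a binary neural network computing a Boolean function has a single output neuron, so $n_L = 1$ and the final factor of the product reduces to $[1 - (4\xi - 4\xi^2)]$. With $H(X) = \log 2$, this produces
\begin{equation*}
I(X;Y_{\text{out}}) \;\le\; (\log 2) \prod_{\ell=1}^{L-1}\bigl[1-(4\xi-4\xi^2)^{n_\ell}\bigr]\bigl[1-(4\xi-4\xi^2)\bigr] .
\end{equation*}
Chaining the two inequalities and cancelling the common $\log 2$ factor should give exactly the claimed bound. The only real subtlety, rather than an obstacle, is bookkeeping: the paper's entropies are in nats while $\Delta$ is written with $\log_2$, and it is precisely this $\log 2$ factor that appears on both sides and cancels cleanly.
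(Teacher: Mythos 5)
Your proposal is correct and follows essentially the same route as the paper: both sandwich the end-to-end mutual information between a Fano lower bound of $\Delta$ (in bits) and the Theorem~\ref{thm:2} upper bound with a single output neuron contributing the factor $1-(4\xi-4\xi^2)$. The only cosmetic difference is that the paper varies a single input coordinate on which $f$ depends (fixing the rest) and decodes that bit, whereas you place a uniform distribution on two full input vectors with differing function values and decode $f(X)$; both give $H(X)=1$ bit and the same conclusion.
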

\begin{proof}
The neural network implements a non-trivial function of its input, therefore at least one of its output neurons outputs a non-constant Boolean function $f$ of the network input, now we can restrict the proof to this particular output neuron. Obviously the computation of $f$ needs to be at least $\delta$-reliable. $f$ depends on at least one of the input to be non-constant. Let $x_i$ denote an input that $f$ depends on, there exists a setting of the other $n-1$ input variables $\left\{c_k\right\}_{k\neq i}$ such that $f(c_1,c_2,...,c_{i-1},x_i,c_{i+1},...,c_n)$ is either $x_i$ or $\bar{x}_i$. Let $y$ be the random variable measured at the output of that neuron for $x_i$ being a uniformly distributed binary random variable and other inputs to the neural network set to $\left\{c_k\right\}_{k\neq i}$. We apply a decoding function to decode $x_i$ from $y$: $\hat{x}_i=y$ or $\hat{x}_i=\bar{y}$ depending on if $f(c_1,c_2,...,c_{i-1},x_i,c_{i+1},...,c_n)$ is $x_i$ or $\bar{x}_i$. Let $e$ represent occurrence of a decoding error: $\hat{x}_i \neq x_i$. Fano's inequality \cite{cover1999elements} gives $I(x_i;y)=H(x_i)-H(x_i|y)\ge H(x_i) - H(e) - P(e) \log_2(|\mathcal{X}|-1)$, where $P(e)$ is the probability of a decoding error, $H(e)= -P(e)\log_2(P(e))-(1-P(e))\log_2(1-P(e))$ the corresponding binary entropy and $\mathcal{X}=2$ is the support of $x_i$. Recall that $y$ is a $\delta$-reliable representation of $f(c_1,c_2,...,c_{i-1},x_i,c_{i+1},...,c_n)$, hence $P(e)\le \delta$ and  $I(x_i;y)\ge 1+\delta \log_2 \delta + (1-\delta) \log_2 (1-\delta)$. Finally we apply Theorem \ref{thm:2} to the subset of neural network that starts from $x_i$ and ends at $y$ and we have $\prod_{\ell=1}^{L-1}\left[1-(4\xi-4\xi^2)^{n_{\ell}}\right]\left[1-(4\xi-4\xi^2)\right] \ge I(x_i;y)\ge 1+\delta \log_2 \delta + (1-\delta) \log_2 (1-\delta) $.
\end{proof}
Sometimes it is useful to consider the situation where the noisy neural network is used as a feature extractor and the output of the network can then be further processed to calculate a Boolean function $\delta$-reliably. In this case, the inequality becomes $\prod_{\ell=1}^{L}\left[1-(4\xi-4\xi^2)^{n_{\ell}}\right]\ge \Delta$ and no longer limited by a single noisy channel in the output neuron.

With Corollary \ref{coro:1}, we proceed to give a lower bound for the number of noisy neurons for reliable computation. We begin by proving a lemma which will be used in the proof of our next theorem.
\begin{customlemma}{2}\label{lemma:2}
Given $L$ strictly positive integers $n_1$, $n_2$, ..., $n_\ell$ and $0\le a\le 1$, then
\begin{equation}
\prod_{\ell=1}^{L} (1-a^{n_\ell}) \le (1-a^{\frac{\sum_{\ell=1}^{L}n_\ell}{L}})^L.
\end{equation}
\begin{proof}
We use the inequality of arithmetic and geometric means to prove this lemma. First notice that $1-a^{n_\ell}$ is non-negative for $0\le a\le 1$ and we can apply the inequality of arithmetic and geometric means to the product on the left hand side, and we have
\begin{equation}
\prod_{\ell=1}^{L} (1-a^{n_\ell}) \le \left[ \frac{L-\sum_{\ell=1}^L a^{n_\ell}}{L}\right]^L \le (1-a^{\frac{\sum_{\ell=1}^{L}n_\ell}{L}})^L,
\end{equation}
where we have applied the inequality a second time for the sum $\sum_{\ell=1}^L a^{n_\ell} / L$. Equality is attained when all the $n_\ell$ are equal: $n_1=n_2=...=n_\ell$.
\end{proof}
\end{customlemma}
Lemma \ref{lemma:2} can be used to bound the information contraction ratio in a noisy neural network. Particularly, it implies that the contraction upper bound is highest when the neurons are equally partitioned among all hidden layers. More importantly, it connects the information contraction ratio with the total number of neurons in the network, which leads to the following theorem
\begin{theorem} \label{thm:3}
If a feed-forward simply layered binary neural network with $L$ layers of $\xi$-noisy neurons computes a non-constant function $\delta$-reliably, then its total number of hidden layer noisy neurons  $N$ satisfies
\begin{equation}
N \ge N_s(\xi,\delta,L),
\end{equation}
with
\begin{equation} \label{eqn:N_s}
N_s(\xi,\delta,L) = \displaystyle \frac{(L-1)\log\left(1-\left(\displaystyle\frac{\Delta}{1-(4\xi-4\xi^2)}\right)^{1/(L-1)}\right)}{\log(4\xi-4\xi^2)}.
\end{equation}
\end{theorem}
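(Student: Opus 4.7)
The plan is to chain Corollary~\ref{coro:1} with Lemma~\ref{lemma:2} and then algebraically solve for $N$. Setting $a := 4\xi - 4\xi^2$ (so $0 < a < 1$ for $\xi \in (0,1/2)$) and $N := \sum_{\ell=1}^{L-1} n_\ell$ for the total hidden-layer neuron count, Corollary~\ref{coro:1} supplies the lower bound
\begin{equation*}
\prod_{\ell=1}^{L-1}\left[1-a^{n_\ell}\right] \;\ge\; \frac{\Delta}{1-a},
\end{equation*}
after dividing off the single output-neuron factor $(1-a)$. Lemma~\ref{lemma:2} (applied with $L-1$ layers rather than $L$) supplies the matching upper bound
\begin{equation*}
\prod_{\ell=1}^{L-1}\left[1-a^{n_\ell}\right] \;\le\; \left(1-a^{N/(L-1)}\right)^{L-1}.
\end{equation*}

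Next I would sandwich the two bounds to get
\begin{equation*}
\frac{\Delta}{1-a} \;\le\; \left(1-a^{N/(L-1)}\right)^{L-1},
\end{equation*}
take the $(L-1)$-th root on both sides, and rearrange to isolate $a^{N/(L-1)}$:
\begin{equation*}
a^{N/(L-1)} \;\le\; 1 - \left(\frac{\Delta}{1-a}\right)^{1/(L-1)}.
\end{equation*}
Finally I would take logarithms. Because $a < 1$ we have $\log a < 0$, so dividing both sides by $\log a$ reverses the inequality and yields exactly
\begin{equation*}
N \;\ge\; \frac{(L-1)\log\!\left(1-\left(\dfrac{\Delta}{1-a}\right)^{1/(L-1)}\right)}{\log a},
\end{equation*}
which is $N_s(\xi,\delta,L)$ after substituting $a = 4\xi - 4\xi^2$.

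There is really no deep obstacle here since the two preceding results do all the work; the only items requiring care are (i) that Corollary~\ref{coro:1} separates the output neuron's single BSC factor $(1-a)$ from the $L-1$ hidden-layer factors, so that Lemma~\ref{lemma:2} is applied to $L-1$ terms rather than $L$, and (ii) the sign-flip when dividing by $\log a < 0$. One should also verify that the argument of the outer logarithm is positive, i.e.\ $\Delta/(1-a) < 1$, which is implied whenever the reliability constraint in Corollary~\ref{coro:1} is feasible (otherwise the theorem is vacuous since no such network exists). The equality condition in Lemma~\ref{lemma:2} further shows that the bound is tight precisely when the hidden neurons are split equally across layers, $n_1 = \cdots = n_{L-1} = N/(L-1)$, giving an interpretation of $N_s$ as the optimal width-balanced allocation.
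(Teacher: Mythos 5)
Your proposal is correct and follows essentially the same route as the paper: chain Corollary~\ref{coro:1} with Lemma~\ref{lemma:2} applied to the $L-1$ hidden-layer factors, then take logarithms and use $\log a<0$ to flip the inequality. The additional remarks on positivity of the logarithm's argument and on the equal-width equality case are sound and consistent with the paper's discussion.
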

\begin{proof}
In Corollary \ref{coro:1}, applying Lemma \ref{lemma:2} to the left hand side of Inequality \eqref{eqn:44} gives
\begin{equation}
\prod_{\ell=1}^{L-1}\left[1-a^{n_{\ell}}\right](1-a)\le 
\left[1-a^{\sum_{\ell=1}^{L-1} n_{\ell}/(L-1)}\right]^{L-1}(1-a),
\end{equation}
with
\begin{equation}
0\le a=4\xi-4\xi^2\le 1.
\end{equation}
Corollary \ref{coro:1} gives
\begin{equation}
\left(1-a^{\frac{N}{L-1}}\right)^{L-1}(1-a)\ge \Delta.
\end{equation}
Rearranging terms and taking the logarithmic give
\begin{equation}
\log\left(1-\left(\frac{\Delta}{1-a}\right)^{1/(L-1)}\right) \ge \frac{N}{L-1} \log a.
\end{equation}
Since $\log a\le 0$, therefore
\begin{equation}
N \ge \frac{(L-1)\log\left(1-\left(\frac{\Delta}{1-a}\right)^{1/(L-1)}\right)}{\log a}.
\end{equation}
\end{proof}
In Figure \ref{fig:Neuron_number}, we show how this lower bound grows with the noise level $\xi$. The minimum number of neurons required grows quickly when $\xi$ approaches $\delta$. Equation \eqref{eqn:N_s} gives that $N_s$ approaches infinity when $\Delta = 1-(4\xi-4\xi^2)$. This is because that information flow is limited by the last noisy neuron which is the output. The cutoff $\xi$ is actually slightly higher than $\delta$ for $\delta < 0.5$. We believe that this is an artifact of taking the upper bounds that cannot be attained in these cases. Another important fact is that at fixed $\delta$ and $\xi$, $N_s$ increases monotonically with the number of layers $L$, which is showcased in Figure \ref{fig:Neuron_number} as well.

\begin{figure}[th]
\begin{center}
\includegraphics[width=1.\linewidth]{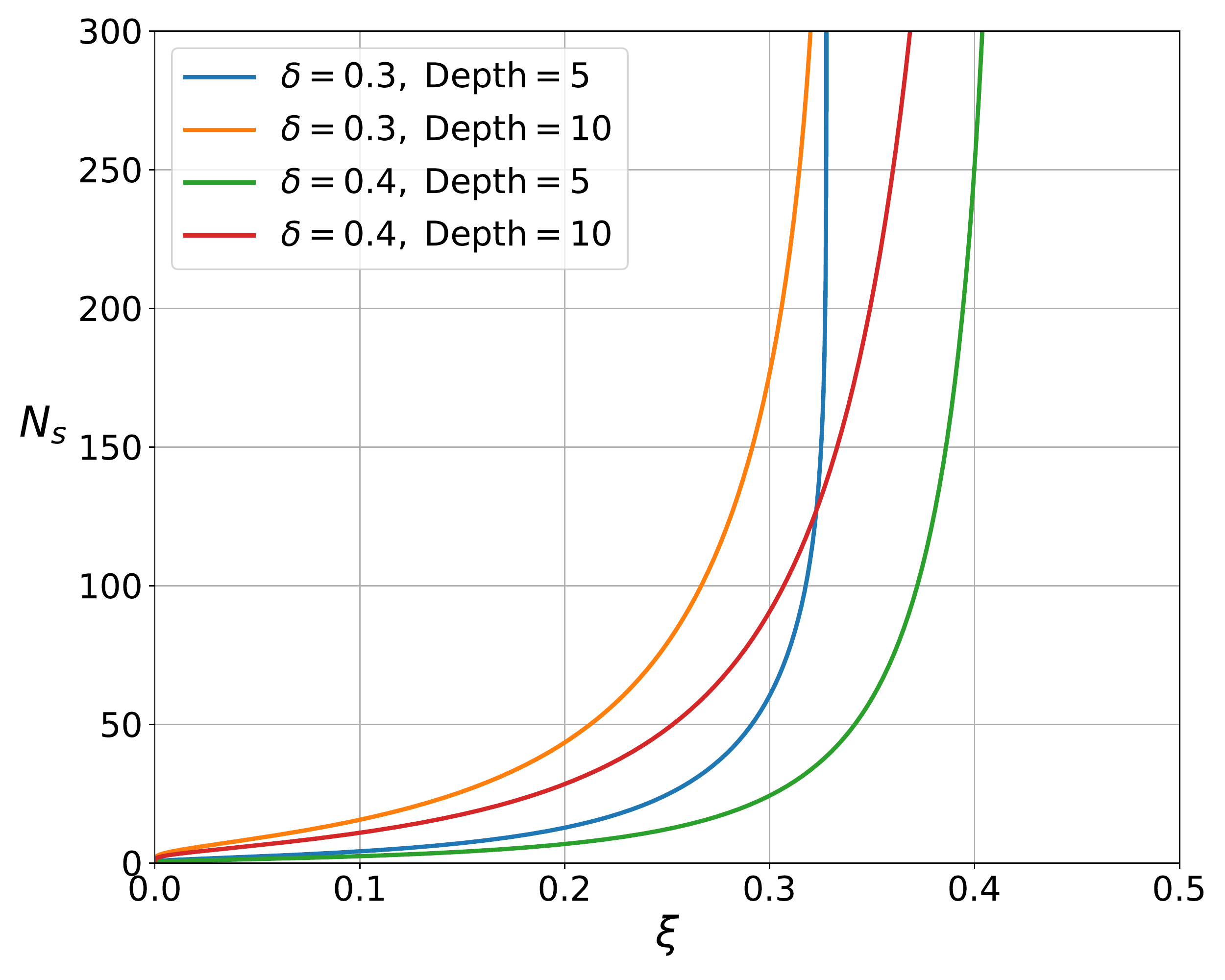}
\caption{Lower bounds for number of neurons $N_s$ in Theorem \ref{thm:3} as a function of the noiseness of neurons evaluated for different computation reliability requirements and network depths.}\label{fig:Neuron_number}
\end{center}
\end{figure}

This particular trade-off between the depth of the noisy network and the number of neurons goes in the opposite direction to various no-flattening results for neural networks and circuits. In terms of expressibility, many depth-width trade-off relationships suggest that flattening or reducing the depth of a network is neuron inefficient. It takes more (sometimes exponentially more) neurons to implement the same function, if limited to arranging them in a smaller number of layers. These results are used as arguments to explain why ``deep" learning works so well \cite{lin2017does}. There are similar results in circuit complexity suggesting that ``depth" helps.
For example, the famous result that $\textit{Parity} \not \in \mathbf{AC}^0$ by Furst \textit{et al.} \cite{furst1984parity} and H\aa stad \cite{haastad1987computational} suggests that implementing the parity function (n element XOR function) over AND, NOT, OR gates of unlimited fan-in costs an exponential number of gates (as a function of input number $n$) if limited to a depth $d$, while there are solutions with linear cost in the number of gates when they are arranged in a tree of depth logarithmic in $n$. Usually these no-flattening results can be expressed as a size complexity function $\Omega(n,d)$ being a lower bound for the number of neurons/gates, which increases with depth $d$. We have the following corollary combining these results with Theorem \ref{thm:3}.


\begin{corollary}\label{coro:2}
Let $f$ be an $n$-input Boolean function of size complexity $\Omega(n,d)$ lower bound in the class of Threshold Circuits (\textbf{TC}, circuits composed of threshold gates or binary neurons with unbounded fan-in and fan-out), suppose that a feed-forward simply layered binary neural network implements function $f$ and is $\delta$-reliable up to a neuron noise level of $\xi$, then at a maximum depth $D$, the minimum number of neurons required in this network is 
\begin{equation}
\min_{1\le d \le D}\max \left(\Omega(n,d), N_s(\xi,\delta,d)+1 \right).
\end{equation}
\end{corollary}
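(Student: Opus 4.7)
The plan is to combine, depth by depth, the noise-driven neuron lower bound of Theorem~\ref{thm:3} with the classical depth-parametrized size complexity $\Omega(n,d)$ for $f$ in \textbf{TC}, and then minimize over the allowed depths. The structure is a standard ``two independent lower bounds plus a designer's choice'' argument: for each fixed depth the network must simultaneously satisfy both bounds, and the designer is free to pick the depth that makes the resulting maximum as small as possible.

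First I would fix an arbitrary depth $d$ with $1\le d\le D$ and consider any feed-forward simply layered binary neural network of depth $d$ at noise level $\xi$ that $\delta$-reliably implements $f$. Two independent lower bounds then apply to its total neuron count $N$. The first is the classical size complexity: viewed as a threshold circuit of depth $d$ computing $f$, the architecture must satisfy $N \ge \Omega(n,d)$. The second comes from Theorem~\ref{thm:3}, which says that the hidden-layer neuron count is at least $N_s(\xi,\delta,d)$; since $f$ is Boolean, the output layer contributes at least one additional neuron, so $N \ge N_s(\xi,\delta,d)+1$. Combining these pointwise in $d$ gives $N \ge \max\bigl(\Omega(n,d),\, N_s(\xi,\delta,d)+1\bigr)$.

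Second, because the designer is free to choose any depth in $\{1,\dots,D\}$, the smallest neuron count consistent with the hypotheses is obtained by minimizing this max over $d$, which yields $\min_{1\le d\le D}\max\bigl(\Omega(n,d),\, N_s(\xi,\delta,d)+1\bigr)$, matching the stated bound. Everything outside of the two pre-established lower bounds is bookkeeping: instantiate Theorem~\ref{thm:3}, add the output neuron, and take the min over admissible depths.

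The main obstacle, which I would explicitly flag, is justifying that the classical bound $\Omega(n,d)$---usually stated for deterministic TC circuits---applies to a noisy network that is only $\delta$-reliable. The bridge is that $\delta<1/2$ makes the noisy network a bounded-error randomized TC circuit of the same architecture, and standard lower bounds of the type behind $\textit{Parity}\notin\mathbf{AC}^0$ are known to extend to the bounded-error randomized regime with at most polylogarithmic size blowup; one can also derandomize via a majority-vote reduction at a modest depth-preserving size cost. A short remark along these lines would make the invocation of $\Omega(n,d)$ rigorous in the noisy regime and complete the proof.
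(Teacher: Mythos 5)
Your proposal matches the paper's reasoning exactly: the paper gives no formal proof of Corollary~\ref{coro:2}, only the observation that the network must simultaneously satisfy the expressibility bound $\Omega(n,d)$ and the noise-robustness bound $N_s(\xi,\delta,d)$ plus one output neuron, after which one minimizes the max over admissible depths $d\le D$. One remark on the obstacle you flag at the end: the intended reading of the hypothesis is that the network \emph{implements} $f$ (i.e., computes it exactly in the noiseless limit) \emph{and} remains $\delta$-reliable up to noise level $\xi$, so the deterministic \textbf{TC} lower bound applies directly to the architecture at $\xi=0$ and no randomized-circuit or derandomization argument is needed; your proposed bridge via majority-vote derandomization would in any case introduce a size blowup that does not preserve the bound exactly as stated.
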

Corollary \ref{coro:2} recognizes that in order for a network to have both enough expressive power and noise robustness, it needs to satisfy both minimum size requirements. The $+1$ accounts for the last output neuron. $N_s$ monotonically increases with the network depth and $\Omega$ monotonically decreases with the network depth. When they cross each other, there is an optimal depth that gives the minimum size complexity requirement. Intuitively, the network cannot be too long and thin since it cannot be noise resistant enough as we have shown; and nor can it be too shallow as the expressibility requirement makes it very neuron inefficient. The function $\Omega(n,d)$ is $f$-dependent and are obtained from circuit complexity theory of the complexity class \textbf{TC}. The complexity lower bound study of class \textbf{TC} is currently a frontier field of research and many efforts are still on-going \cite{chen2019bootstrapping}. We give an example for $f$ being the parity function defined as 
\begin{equation}
f(x) = x_1 \oplus x_2 \oplus ... \oplus x_n,
\end{equation}
where $\oplus$ denotes XOR. Impagliazzo \textit{et al.} \cite{impagliazzo1997size} proved that any threshold circuit of depth $d$ that computes the parity function has at least $(n/2)^{1/2(d-1)}$ threshold gates or binary neurons in our terminology. Therefore, we can have $\Omega(n,d)=(n/2)^{1/2(d-1)}$, which holds as the absolute lower bound on the number of neurons. It may be possible to further improve it, since we restrict the network structure to be simply layered. 

\begin{figure}[t]
\begin{center}
\includegraphics[width=1.\linewidth]{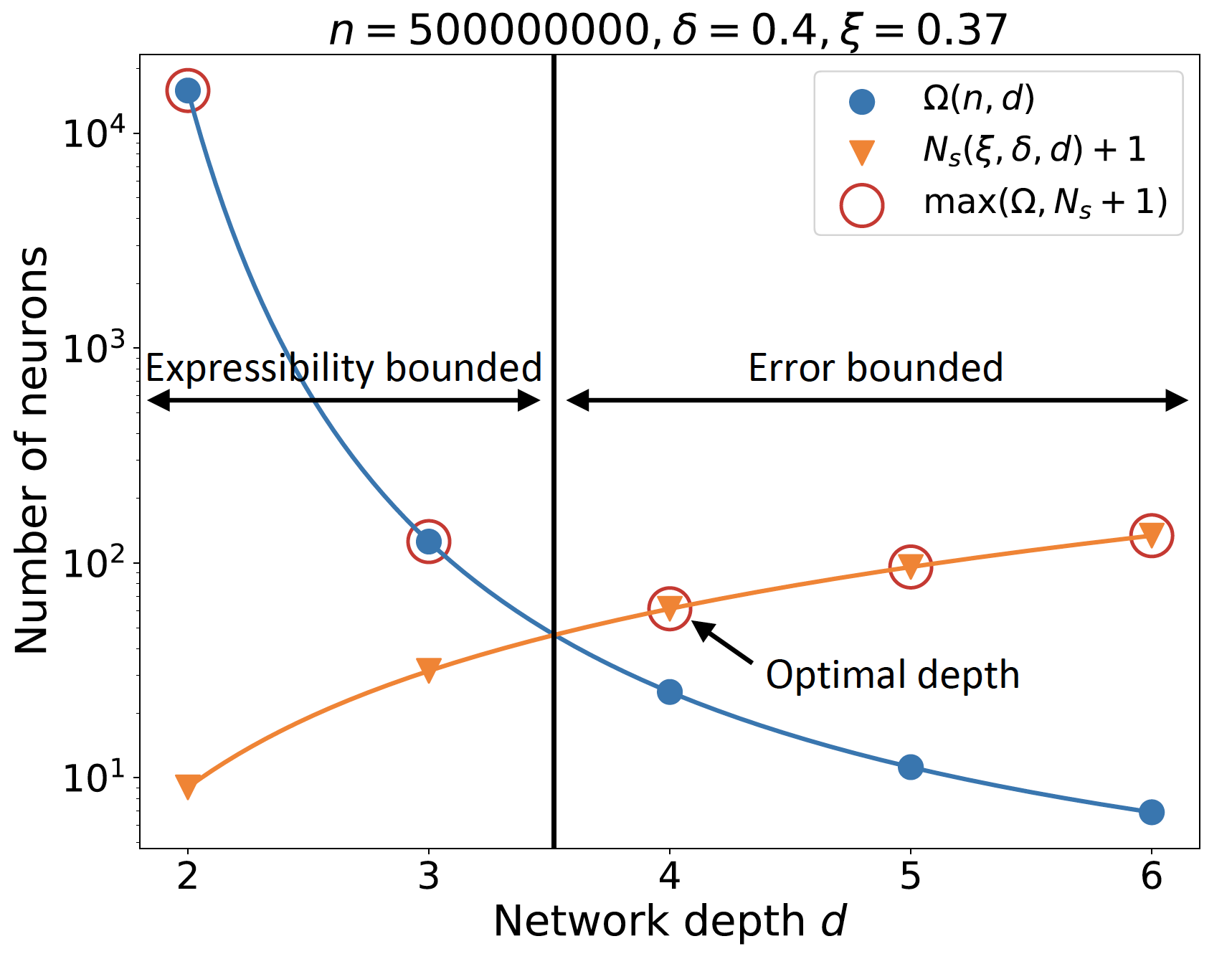}
\caption{Minimum number of neurons required according to Corollary \ref{coro:2} for the parity function, blue dots are the size complexity bounds for the parity function $\Omega(n,d)=(n/2)^{1/2(d-1)}$ and orange triangles are the noise robustness bounds. The minimum number of neurons required needs to be more than the maximum of both bounds, which are shown by red circles.} \label{fig:two_regimes}
\end{center}
\end{figure}

Figure \ref{fig:two_regimes} shows a case for the parity function with $n=5\times 10^8$, $\delta=0.4$, $\xi=0.37$. $\Omega(n,d)$ (blue curve) monotonically decreases with $d$ and $N_s+1$ (orange curve) monotonically increases with $d$. The crossing happens between $d=3$ and $d=4$. To the left of this crossing point, the minimum number of neurons required is limited by the required expressive power of the network and to the right of this crossing point it is limited by the noise robustness requirement. There are therefore two distinct regimes where one is expressibility bounded and the other is error bounded. The optimal number of layers that achieves the minimum neuron number requirement in this case is $4$ with a minimum lower bound of $61.22$ neurons. (Note that as we are obtaining results from bounds that are not necessarily tight, the actual optimal number might be different.) This is almost a ten-fold increase from the minimum requirement in the absence of noise for $d\le 6$ which is $\Omega(n,6)=6.915$. Performing computation with noisy neurons while retaining noise-robustness requires using more neurons. Our paper gives a lower bound for such increase. 

The results we have derived in this section can be extended to the case of weakly-correlated layer-wise neuron noise using Proposition \ref{prop:2}. 
Qualitatively speaking, Corollary \ref{coro:1bis} indicates that weakly-correlated neuron noise results in an end-to-end information contraction upper bound that is lower than the independent noise case at equal $\xi$, thus requiring a further increase in the minimum number of neurons to combat loss of information.

One way to understand this increase of the number of neurons is via considering error correction mechanism. 
As the gates becomes noisy, each logical bit of information would require multiple physical bits to encode so that the robustness condition can be satisfied, which is similar to the error-correction overhead that we will discuss the next.

\section{Application to fault-tolerant cellular automata}
\label{Sec:CA}

Since the early works of Turing~\cite{turing1990chemical}, people are interested in the ``self-organization'' phenomena, where stable patterns form in a dynamical system, despite dissipation and noise. In the language of statistical physics, these stable patterns can be regarded as phases of the system at a finite temperature~\cite{liggett2012interacting}; while to computer scientists, the fact that different initial conditions can lead to unique different patterns means that information can be reliably stored in these complex systems, which serve as fault-tolerant memory. A simplified model for studying these phenomena, adopted by both statistical physics~\cite{liggett2012interacting,wolfram1983statistical,grinstein1985statistical} and computer scientists~\cite{gacs2001reliable,gray2001reader,toom1974nonergodic}, is the cellular automata~\cite{wolfram1983statistical}, where time and space are both discrete, similar to the binary neurons being considered in this paper. In fact, similar to Von Neumann's circuit approach, fault-tolerant cellular automata form another class of model capable of reliable universal computation with unreliable components~\cite{gacs2001reliable}. More recently, these systems are also found to be connected to a novel phase of matter called time-crystals~\cite{yao2020classical}.

\begin{figure}[t]
\begin{center}
\includegraphics[width=1.\linewidth]{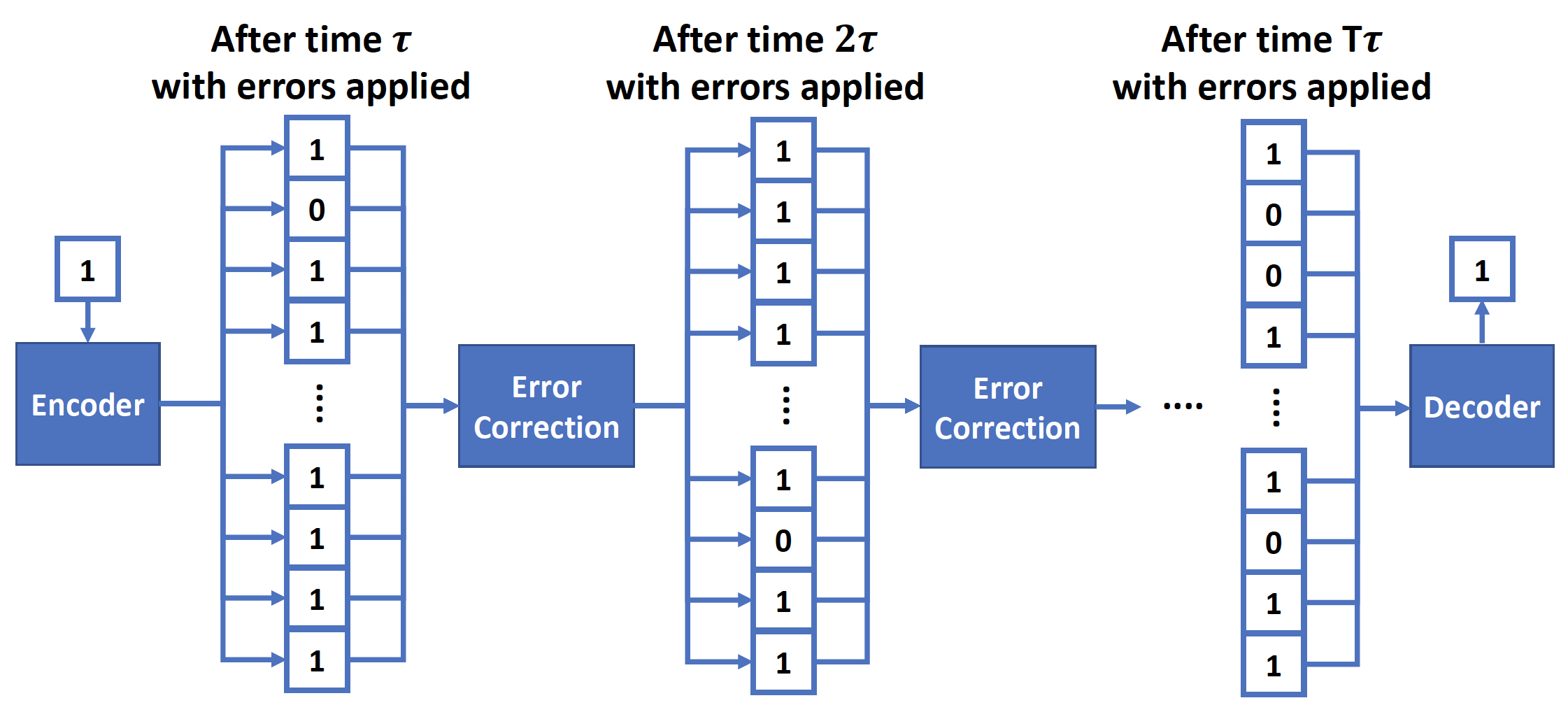}
\caption{Schematic of fault-tolerant memory with error correction, the system is unrolled in time dimension, it corresponds to the same physical memory array. $\tau$ is the time between two consecutive memory ``refreshes". One logical bit is protected over a long period of time despite continuous corruption by random noise that happens to physical memory bits.} \label{fig:error_correction}
\end{center}
\end{figure}

In this section, we apply our information contraction results to fault-tolerant cellular automata with any (global or local) update rules. For simplicity, we consider the capability of these systems serving as fault-tolerant memory. 
Without loss of generality, we assume that such systems encode 1 bit of information with $n$ noisy physical memory bits and apply some error correction rules at constant intervals of time (a time step of $\tau$) only using information stored in noisy physical memory bits, meanwhile errors may occur randomly in these memory cells. Our approach may apply to wider classes of systems, however, in this paper we will consider fault-tolerant cellular automata~\cite{gacs2001reliable} as an example of fault-tolerant memory.

An important question to ask is: given $\xi$-noisy memory cells, what is the minimum memory overhead $n(\delta, T,\xi)$ required such that 1 bit of logical information can be $\delta$-reliably retrieved (correctly decoded with probability at least $1-\delta$) after $T$ intervals in time? This time duration $T(\delta,n,\xi)$ (normalized by $\tau$) with $\delta$ being a constant (e.g. $2/3$ as in \cite{gacs2001reliable}) is also called the relaxation time of the system. While any finite system has a finite relaxation time, stable phases in statistical physics are defined as the case of $T(\delta,n,\xi)\to\infty$ at the thermodynamic limit of $n\to\infty$. In this sense, obtaining bounds of the relaxation time is crucial for proving the stability of phases of matter. It turns out that our information contraction results give generic bounds for these important quantities applicable to any possible error correction rules---we obtain a generic lower bound of the error correction overhead $n$ or equivalently an upper bound for the relaxation time $T$.

In Figure \ref{fig:error_correction}, we show a schematic of how such a system works. It bears a strong resemblance to the noisy binary neural networks we have studied so far in previous sections. The number of time intervals $T$ is analogous to the number of layers $L$, random errors in memory can be considered to be introduced by independent binary noisy channels and the error correction block can be considered as a generic data processing function similar to the function realized by a layer of neural network. This system is analogous to a noisy binary neural network of constant width $n$ and of depth $T$. It is also simply layered since it is impossible to have skip connections in time dimension.

Therefore, if errors occur uniformly and independently with probability $\xi$ during a time period of $\tau$ between two error correction points, the results of Theorem \ref{thm:2} and Corollary \ref{coro:1} still hold with slight modifications. In order to decode the original logical bit at least $\delta$-reliably after $T$ intervals of time, the residual mutual information in the noisy physical memory about the original bit needs to satisfy
\begin{equation}
\left[1-(4\xi-4\xi^2)^{n}\right]^T \ge \Delta(\delta),
\end{equation}
with $\Delta$ defined as in Equation \eqref{eqn:45}. This result is no more limited by the single output noisy channel since we consider the decoding function to be a noiseless process. Hence we obtain a lower bound for $n$:
\begin{equation}
n \ge \frac{\log(1-\Delta^{1/T}(\delta))}{\log(4\xi -4\xi^2)}.
\end{equation}
This is another impossibility result derived from our information theoretic approach, it says that for given $\delta$, $\xi$ and $T$, it is impossible to use less overhead than what is given by the lower bound above. 
Figure \ref{fig:ECLB} shows numerical values of this lower bound plotted as a function of $T$ for relevant values of $\delta$ and $\xi$ for fault-tolerant memory.
\begin{figure}[t]
\begin{center}
\includegraphics[width=1.\linewidth]{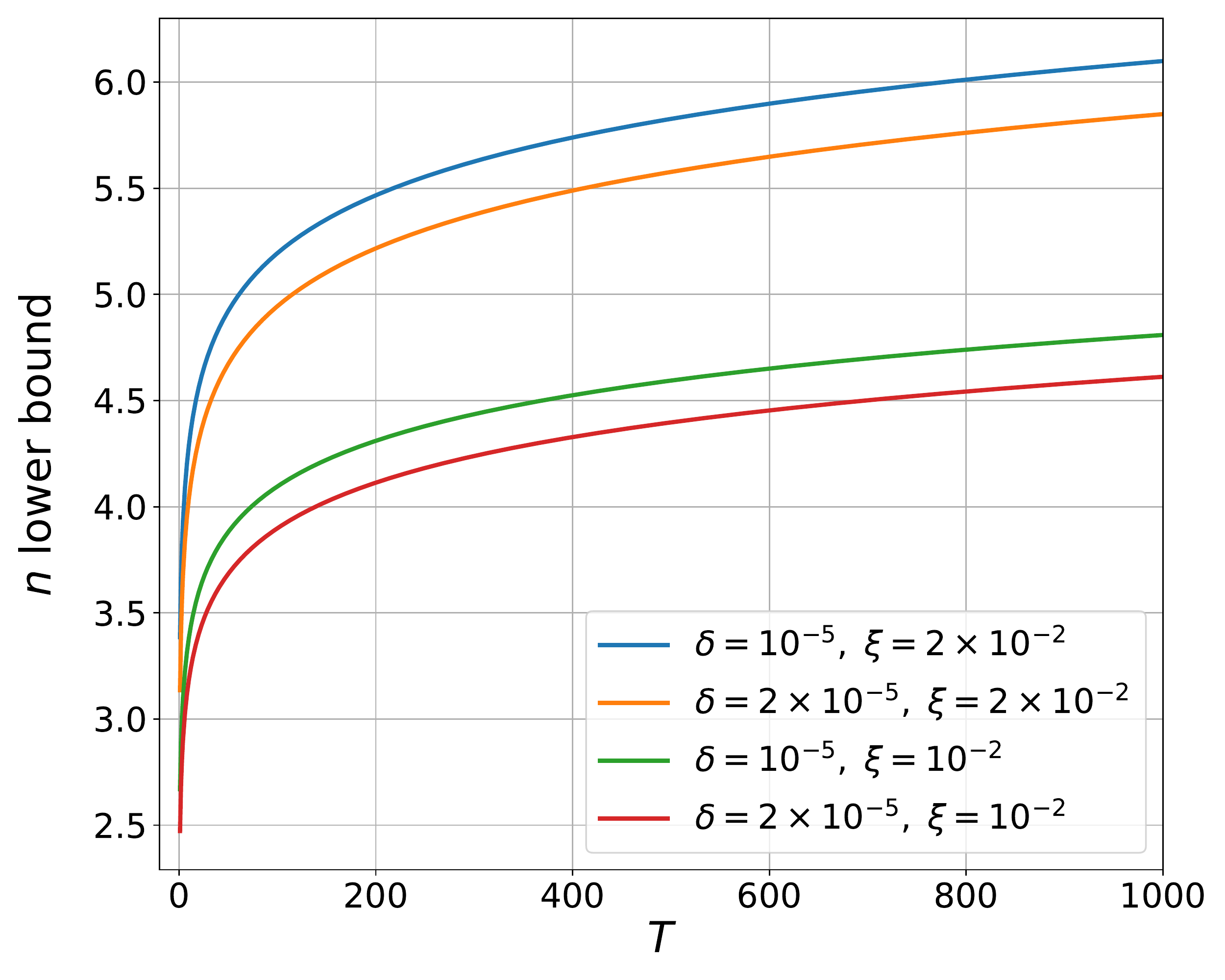}
\caption{The error correction overhead lower bound plotted for two different values of $\delta$ and $\xi$ as a function of the number of time intervals $T$.} \label{fig:ECLB}
\end{center}
\end{figure}
Alternatively, this result can be stated as an upper bound of the relaxation time $T$
\begin{equation}
T \le \frac{\log \Delta(\delta)}{\log(1-(4\xi -4\xi^2)^n)}.
\end{equation}
At fixed $\xi$ and $\delta$, when $n$ goes to infinity, this upper bound can be approximated using a Taylor series expansion and is asymptotically equivalent to $\log \Delta^{-1} \exp(n\log(1/(4\xi -4\xi^2)))$. 
Relaxation time upper bound grows exponentially with system size $n$, at sufficiently big $n$, there exists $C_1(\delta)>0$ such that 
\begin{equation}
T \le C_1(\delta)\exp\left(n \log \left(\frac{1}{4\xi - 4\xi^2}\right) \right).
\label{TUB}
\end{equation}

We derive a lower bound for $T$ in the case of simple repetition encoding and error correction using global majority vote. 
In this case, an error cannot be corrected only if a catastrophic event happens where more than half of the physical memory bits are flipped by random noise before the next error correction happens, we let $p_e$ denote this probability. After $T\tau$, the information can be correctly decoded with a global majority vote if catastrophic events have happened an even number of times between time step $1$ and $T$. Since these events are independent and their distribution follows a binomial distribution. Using the relationship between odd and even binomial terms, it is easy to get the probability of correct decoding as $(1+(1-2p_e)^T)/2$. Therefore $(1+(1-2p_e)^T)/2=1-\delta$ and 
\begin{equation}
T=\frac{\log(1-2\delta)}{\log(1-2p_e)}.
\end{equation}
Probability $p_e$ can be calculated as the probability of tail events in Bernoulli trials, which can be bounded with the Chernoff-Hoeffding bound (the more general form of the Chernoff bounds involving relative entropy, see Theorem 1 in \cite{hoeffding1994probability}) as 
\begin{align}
p_e &\le \exp\left( -\frac{n}{2} \log\left(\frac{1}{2\xi}\right)-\frac{n}{2}\log\left(\frac{1}{2(1-\xi)}\right)\right) \nonumber\\ 
&= \exp\left( -\frac{n}{2} \log \left(\frac{1}{4\xi - 4\xi^2}\right) \right).
\end{align}
Hence for sufficiently big $n$, there exists $C_2(\delta)>0$ such that
\begin{equation}
T \ge C_2(\delta)\exp\left(\frac{n}{2} \log \left(\frac{1}{4\xi - 4\xi^2}\right) \right).
\label{TLB}
\end{equation}
Comparing our upper bound in Inequality~\eqref{TUB} and the lower bound in Inequality~\eqref{TLB}, our information contraction results give a nontrivial upper bound of the relaxation time $T$ that is asymptotically tight up to a constant factor of $2$ in the exponent. 
Relaxation time that grows exponentially with size $n$ of the system is provably the best that can be achieved. Global majority vote is expensive to implement, there are local update rules such as the 2-dimensional Toom's rule \cite{toom1974nonergodic} that can also achieve exponential growth in relaxation time \cite{berman1988investigations}.

We also note that some fault-tolerant cellular automata can be stable even if the errors are correlated~\cite{gacs2001reliable}, and results similar to Proposition~\ref{prop:2} can potentially be developed for fault-tolerant cellular automata.

\section{Conclusion}
\label{Sec:conclusion}
In this paper, we have made contributions to the fundamental understanding of neural networks or general networks with noisy components. Information theoretic tools are shown to be essential to this study. We prove a strong data processing inequality which may find broader application and is thus interesting in its own right. The information contraction bounds given by our method improve the previous results by Evans and Schulman. Information contraction bounds are derived for both independent and weakly-correlated neuron noise models. Leveraging these results, we have shown that information contraction in noisy neural networks requires an increase in the minimum number of neurons to provide adequate noise robustness. Robust information flow favors wide and shallow networks over narrower and deeper ones, while network expressibility requirements favor the opposite. An optimal minimum network depth which gives the most neuron efficient lower bound is obtained from the balance of these two effects. Additionally, we show an application of our results to fault-tolerant cellular automata, showcasing wider applicability of our approach. This work can be readily applied to advance the understanding of information processing systems found in neuroscience, physics and electronics, where noise is naturally present.

Finally, we list some potential future directions. We have derived an SDPI for discrete channels and therefore considered binary neural networks; hence, the generalization of our SDPI results to the continuous-variable case is crucial to precise information bounds for general neural networks beyond the binary case. 
We have considered independent neuron noise and a special case of weakly-correlated neuron noise, more general correlated neuron noise models in noisy neural networks can potentially be studied by our Theorem~\ref{theo:1}. 
It may also be interesting to use Theorem~\ref{theo:1} to derive network weight dependent SDPIs and rule-dependent relaxation time for different fault-tolerant cellular automata.

\textit{Note added.} After the completion of this work, we have become aware of a recent extension~\cite{ordentlich2020strong} of Theorem 21 of Ref.~\cite{polyanskiy2017strong} that can lead to our Theorem \ref{theo:1} via a different proof.


%
\appendices

\section{Proof of Lemma~\ref{lemma:1}} 
\label{App:lemma1}
Let $M$ denote the supremum:
\begin{equation}
M=\sup_{\substack{\forall \boldsymbol{c}\in {\rm I\!R}^{n}\\
\forall \boldsymbol{x}\in \mathring{\mathcal{P}}_n}}
\frac{\boldsymbol{c}^{\intercal}\mathbf{H}_{f}(\boldsymbol{x})\boldsymbol{c}}{\boldsymbol{c}^{\intercal}\mathbf{H}_{g}(\boldsymbol{x})\boldsymbol{c}}.
\end{equation}
If $M$ is finite, by definition, we have $\forall \boldsymbol{c}\in {\rm I\!R}^{n}$ and $\forall \boldsymbol{x}\in \mathring{\mathcal{P}}_n$, $\boldsymbol{c}^{\intercal}[\mathbf{H}_{f}(\boldsymbol{x})-M\mathbf{H}_{g}(\boldsymbol{x})]\boldsymbol{c}\le 0$. Additionally, linearity of taking the second derivatives gives $\mathbf{H}_{f}(\boldsymbol{x})-M\mathbf{H}_{g}(\boldsymbol{x})= \mathbf{H}_{f-Mg}(\boldsymbol{x})$.
The Hessian matrix $\mathbf{H}_{f-Mg}(\boldsymbol{x})$ is negative semi-definite.
Hence function $f-Mg$ is also a concave function on $\mathring{\mathcal{P}}_n$.
Apply Jensen's inequality to the concave function $f-Mg$ for $d$ points $(\boldsymbol{x}_1,...,\boldsymbol{x}_d)$ on $\mathring{\mathcal{P}}_n$ with weights $(p_1,...,p_d)$
\begin{equation}
\sum_{i=1}^{d}p_i(f-Mg)(\boldsymbol{x}_i)\le(f-Mg)\left(\sum_{i=1}^d p_i\boldsymbol{x}_i\right).
\end{equation}
Rearrange the terms, we have for $\forall (\boldsymbol{x}_1,...,\boldsymbol{x}_d)\in \mathring{\mathcal{P}}_{n}^d,
\forall(p_1,...,p_d)\in\mathcal{P}_{d-1}$, 
\begin{equation}
\begin{aligned}
&\sum_{i=1}^{m}p_i f(\boldsymbol{x}_i)-f(\sum_{i=1}^m p_i\boldsymbol{x}_i) \le\\
&M\left[ \sum_{i=1}^{m}p_i g(\boldsymbol{x}_i)-g(\sum_{i=1}^m p_i\boldsymbol{x}_i)\right].
\end{aligned}
\end{equation}
Therefore,
\begin{equation}
\sup_{\substack{\forall (\boldsymbol{x}_1,...,\boldsymbol{x}_d)\in \mathring{\mathcal{P}}_{n}^d\\
\forall(p_1,...,p_d)\in\mathcal{P}_{d-1}}}\frac{f_2(\boldsymbol{x}_1,...,\boldsymbol{x}_d;p_1,...,p_d)}{g_2(\boldsymbol{x}_1,...,\boldsymbol{x}_d;p_1,...,p_d)}
\le M.
\end{equation}
The above inequality trivially holds when $M$ is infinite. We proceed to show that the upper bound $M$ can be effectively attained by the supremum on the left-hand side for the equality to hold. There exists a sequence $(\tilde{\boldsymbol{x}}_j)$ in $\mathring{\mathcal{P}}_n$ such that $\exists \tilde{\boldsymbol{c}} \in  {\rm I\!R}^{n}$ and $\left \|\tilde{\boldsymbol{c}}  \right \|_2=1$,  $\lim_{j\to\infty}\frac{\tilde{\boldsymbol{c}}^{\intercal}\mathbf{H}_{f}(\tilde{\boldsymbol{x}}_j)\tilde{\boldsymbol{c}}}{\tilde{\boldsymbol{c}}^{\intercal}\mathbf{H}_{g}(\tilde{\boldsymbol{x}}_j)\tilde{\boldsymbol{c}}} = M$.
Consider $\left \{\boldsymbol{x}_1,...,\boldsymbol{x}_d\right\}$ and $\boldsymbol{x}_i=\tilde{\boldsymbol{x}}_j + \Delta \boldsymbol{x}_{i}$ with $j$ fixed, in the limit where all $\Delta \boldsymbol{x}_i$ vanish: $\Delta\boldsymbol{x}_i\to 0$, we can expand $f_2$ and $g_2$ in their Taylor series up to the second order in $\Delta \boldsymbol{x}_i$
\begin{align}
&f_2(\boldsymbol{x}_1,...,\boldsymbol{x}_d;p_1,...,p_d)
= \frac{1}{2} \sum_{i=1}^d p_i \Delta\boldsymbol{x}_i^{\intercal}
\mathbf{H}_f(\tilde{\boldsymbol{x}}_j)
\Delta\boldsymbol{x}_i\nonumber\\ 
& -\frac{1}{2} (\sum_{i=1}^{d} p_i \Delta\boldsymbol{x}_i)^{\intercal}
\mathbf{H}_f(\tilde{\boldsymbol{x}}_j)
(\sum_{i=1}^{d} p_i \Delta\boldsymbol{x}_i)
+ \mathcal{O}(\left\|\Delta\boldsymbol{x}_i\right\|^3), \\
&g_2(\boldsymbol{x}_1,...,\boldsymbol{x}_d;p_1,...,p_d)
= \frac{1}{2} \sum_{i=1}^d p_i \Delta\boldsymbol{x}_i^{\intercal}
\mathbf{H}_g(\tilde{\boldsymbol{x}}_j)
\Delta\boldsymbol{x}_i  \nonumber\\
&-
\frac{1}{2} (\sum_{i=1}^{d} p_i \Delta\boldsymbol{x}_i)^{\intercal}
\mathbf{H}_g(\tilde{\boldsymbol{x}}_j)
(\sum_{i=1}^{d} p_i \Delta\boldsymbol{x}_i)
+ \mathcal{O}(\left\|\Delta\boldsymbol{x}_i\right\|^3).
\end{align}
It is easy to verify that the zeroth order and first order terms all vanish in these expansions. Now choose $\Delta \boldsymbol{x}_i=\lambda_i \tilde{\boldsymbol{c}}$ (this is obviously possible with small enough $\lambda_i$ since $\mathring{\mathcal{P}}_{n}$ is an open set, geometrically we are restricting all $\boldsymbol{x}_i$ to be on the line defined by $\tilde{\boldsymbol{x}}_j + \lambda \tilde{\boldsymbol{c}}$, $\lambda \in {\rm I\!R}$), then
\begin{align}
&f_2(\boldsymbol{x}_1,...,\boldsymbol{x}_d;p_1,...,p_d)
= \nonumber\\
& \frac{1}{2} \left[\sum_{i=1}^d p_i\lambda_i^2
-\left(\sum_{i=1}^{d}p_i\lambda_i\right)^2\right] \tilde{\boldsymbol{c}}^{\intercal}
\mathbf{H}_f(\tilde{\boldsymbol{x}}_j)
\tilde{\boldsymbol{c}}
+ \mathcal{O}(\lambda_i^3), \\
&g_2(\boldsymbol{x}_1,...,\boldsymbol{x}_d;p_1,...,p_d)
= \nonumber\\ 
&\frac{1}{2} \left[\sum_{i=1}^d p_i\lambda_i^2
-\left(\sum_{i=1}^{d}p_i\lambda_i\right)^2\right]\tilde{ \boldsymbol{c}}^{\intercal}
\mathbf{H}_g(\tilde{\boldsymbol{x}}_j)
\tilde{\boldsymbol{c}}
+ \mathcal{O}(\lambda_i^3).
\end{align}
Strict convexity of the square function $x\rightarrow x^2$ gives $\sum_{i=1}^d p_i \lambda_i^2 \ge (\sum_{i=1}^d p_i\lambda_i)^2$, the identity is attained if and only if all the $\lambda_i$ are equal or $(p_1, ...,p_d)$ being a one-hot weight vector. It suffices to choose the values of $\lambda_i$ that are different from each other and $p_i = 1/d$ for example, in the limit where $\forall i$, $\lambda_i \to 0$. The ratio $f_2/g_2$ tends to ${\tilde{\boldsymbol{c}}^{\intercal}\mathbf{H}_{f}(\tilde{\boldsymbol{x}}_j)\tilde{\boldsymbol{c}}}/{\tilde{\boldsymbol{c}}^{\intercal}\mathbf{H}_{g}(\tilde{\boldsymbol{x}}_j)\tilde{\boldsymbol{c}}}$.
Now, let $j$ go to infinity, since the two limiting operations are independent from each other, we have a sequence of $(\boldsymbol{x}_1,...,\boldsymbol{x}_d)\in \mathring{\mathcal{P}}_{n}^d$ and $(p_1,...,p_d)\in\mathcal{P}_{d-1}$ such that the ratio $f_2/g_2$ tends to $M$. We have previously proven that $M$ is also an upper bound, hence
\begin{equation}
\sup_{\substack{\forall (\boldsymbol{x}_1,...,\boldsymbol{x}_d)\in \mathring{\mathcal{P}}_{n}^d\\
\forall(p_1,...,p_d)\in\mathcal{P}_{d-1}}}
\frac{f_2(\boldsymbol{x}_1,...,\boldsymbol{x}_d;p_1,...,p_d)}{g_2(\boldsymbol{x}_1,...,\boldsymbol{x}_d;p_1,...,p_d)}
= M.
\end{equation}
This completes the proof of Lemma \ref{lemma:1}.

\section{More details of proof of Theorem \ref{theo:1}}
\label{App:theorem1}
\begin{flushleft} 
In this Appendix we give additional details that were omitted in the proof of Theorem \ref{theo:1}. 
The matrix form of $\mathbf{H}_{f}(\boldsymbol{p})$ is
\end{flushleft} 
\begin{strip}
\begin{equation}
\mathbf{H}_{f}(\boldsymbol{p})
=-
\begin{pmatrix}
\displaystyle\sum_{j=1}^m\frac{(a_{1,j}-a_{n,j})^2}{\sum_{i=1}^n p_i a_{i,j}}  
&\displaystyle\sum_{j=1}^m\frac{(a_{1,j}-a_{n,j})(a_{2,j}-a_{n,j})}{\sum_{i=1}^n p_i a_{i,j}}
&  \dots   
&\displaystyle\sum_{j=1}^m\frac{(a_{1,j}-a_{n,j})(a_{n-1,j}-a_{n,j})}{\sum_{i=1}^n p_i a_{i,j}}  \\
\displaystyle\sum_{j=1}^m\frac{(a_{2,j}-a_{n,j})(a_{1,j}-a_{n,j})}{\sum_{i=1}^n p_i a_{i,j}}
& \displaystyle\sum_{j=1}^m\frac{(a_{2,j}-a_{n,j})^2}{\sum_{i=1}^n p_i a_{i,j}} & \ddots   & \vdots \\
\vdots  &  \ddots       & \ddots  & 
\vdots
\\
\displaystyle\sum_{j=1}^m\frac{(a_{n-1,j}-a_{n,j})(a_{1,j}-a_{n,j})}{\sum_{i=1}^n p_i a_{i,j}}
& \dots 
& \dots  
& \displaystyle\sum_{j=1}^m\frac{(a_{n-1,j}-a_{n,j})^2}{\sum_{i=1}^n p_i a_{i,j}} 
\end{pmatrix}.
\end{equation}
\end{strip}
\begin{flushleft} 
In order to obtain Equation \eqref{eqn:Q_g_int}, we used the identity
\end{flushleft}
\begin{strip}
\begin{align}
\sum_{j=1}^m \frac{p_s a_{t,j}+p_{t} a_{s,j}}{p_s+p_{t}}
=\sum_{j=1}^m \frac{p_s p_{t}(a_{s,j}-a_{t,j})^2}{(p_s+p_{t})\sum_{i=1}^{n+1} p_i a_{i,j}}
+\sum_{j=1}^m \left[\frac{(p_s+p_{t})a_{s,j} a_{t,j}}{\sum_{i=1}^{n+1} p_i a_{i,j}}
+\frac{\sum_{u \ne s, u\ne t}p_s p_u a_{t,j}a_{u,j}+p_t p_{u}a_{s,j}a_{u,j}}{(p_s+p_{t})\sum_{i=1}^{n+1} p_i a_{i,j}}\right].
\end{align}
\end{strip}

\begin{flushleft} 
The full expression of $Q_g(\boldsymbol{c})$ is then:
\end{flushleft}
\begin{strip}
\begin{align}
Q_g(\boldsymbol{c})&=\sum_{s=1}^n \left(c_s^2+\sum_{t \ne s}c_s c_t\right)\sum_{j=1}^m \frac{(a_{s,j}-a_{n+1,j})^2}{\sum_{i=1}^{n+1} p_i a_{i,j}}-\sum_{1\le s <t \le n} c_s c_t \sum_{j=1}^m \frac{(a_{s,j}-a_{t,j})^2}{\sum_{i=1}^{n+1} p_i a_{i,j}} \nonumber\\
&\;+\sum_{s=1}^n \left(c_s^2+\sum_{t \ne s}c_s c_t\right)
\sum_{j=1}^m \left[\frac{(p_s+p_{n+1})^2a_{s,j} a_{n+1,j}}{p_s p_{n+1}\sum_{i=1}^{n+1} p_i a_{i,j}}
+\frac{\sum_{u \ne s, u\ne n+1}p_s p_u a_{n+1,j}a_{u,j}+p_{n+1} p_{u}a_{s,j}a_{u,j}}{p_s p_{n+1}\sum_{i=1}^{n+1} p_i a_{i,j}}\right]\nonumber \\
&\;-\sum_{1\le s <t \le n} c_s c_t \sum_{j=1}^m\left[\frac{(p_s+p_{t})^2a_{s,j} a_{t,j}}{p_s p_t \sum_{i=1}^{n+1} p_i a_{i,j}}
+\frac{\sum_{u \ne s, u\ne t}p_s p_u a_{t,j}a_{u,j}+p_t p_{u}a_{s,j}a_{u,j}}{p_s p_t \sum_{i=1}^{n+1} p_i a_{i,j}}\right].
\end{align}
\end{strip}
\begin{flushleft} 
The full expression of $Q_f(\boldsymbol{c})$ is:
\end{flushleft}
\begin{strip}
\begin{align}
Q_f(\boldsymbol{c})&=\sum_{s=1}^n c_s^2 \sum_{j=1}^m \frac{(a_{s,j}-a_{n+1,j})^2}{\sum_{i=1}^{n+1} p_i a_{i,j}}
+\sum_{1\le s < t \le n} c_s c_t \sum_{j=1}^m\frac{2(a_{s,j}-a_{n+1,j})(a_{t,j}-a_{n+1,j})}{\sum_{i=1}^{n+1} p_i a_{i,j}} \nonumber \\
&=\sum_{s=1}^n c_s^2 \sum_{j=1}^m \frac{(a_{s,j}-a_{n+1,j})^2}{\sum_{i=1}^{n+1} p_i a_{i,j}}
+\sum_{1\le s < t \le n} c_s c_t \sum_{j=1}^m\frac{(a_{s,j}-a_{n+1,j})^2+(a_{t,j}-a_{n+1,j})^2-(a_{s,j}-a_{t,j})^2}{\sum_{i=1}^{n+1} p_i a_{i,j}}  \nonumber \\
&=\sum_{s=1}^n \left(c_s^2+\sum_{t \ne s}c_s c_t\right)\sum_{j=1}^m \frac{(a_{s,j}-a_{n+1,j})^2}{\sum_{i=1}^{n+1} p_i a_{i,j}}-\sum_{1\le s <t \le n} c_s c_t \sum_{j=1}^m \frac{(a_{s,j}-a_{t,j})^2}{\sum_{i=1}^{n+1} p_i a_{i,j}}.
\end{align}
\end{strip}

\newpage 
\
\newpage
\clearpage

\begin{flushleft} 
The full derivations of $Q_{s,t}(\boldsymbol{c})$ and $Q_{s,n+1}(\boldsymbol{c})$ are:
\end{flushleft}
\begin{strip}
\begin{align}
Q_{s,t}(\boldsymbol{c})&=
\frac{p_t}{p_s}\left(c_s^2+\sum_{u\ne s} c_s c_u\right)
+\frac{p_s}{p_t}\left(c_t^2+\sum_{u \ne t} c_t c_u\right)
-\left(2+\frac{p_s}{p_t}+\frac{p_t}{p_s}\right)c_s c_t
-\frac{p_s}{p_t} \sum_{u\ne s, u \ne t} c_u c_t
-\frac{p_t}{p_s} \sum_{u\ne s, u \ne t} c_u c_s \nonumber\\ 
&=\frac{p_t}{p_s}c_s^2+\frac{p_s}{p_t}c_t^2-2c_s c_t \nonumber\\
&=\left(\sqrt{\frac{p_t}{p_s}}c_s - \sqrt{\frac{p_s}{p_t}}c_t\right)^2, \\
Q_{s,n+1}(\boldsymbol{c})&=
\left(c_s^2+\sum_{t \ne s} c_s c_t \right) \left( \frac{p_s}{p_{n+1}}
+\frac{p_{n+1}}{p_s}+2\right)
+\frac{p_s}{p_{n+1}} \sum_{u \ne s, t\ne s} c_u c_t
+\frac{p_s}{p_{n+1}} \sum_{t\ne s} c_s c_t
-\frac{p_{n+1}}{p_s}\left(\sum_{s>t} c_s c_t +\sum_{s<t} c_s c_t\right)\nonumber \\
&=\left[c_s\left(\sqrt{\frac{p_s}{p_{n+1}}}+\sqrt{\frac{p_{n+1}}{p_{s}}}\right)+\sum_{t \ne s} \sqrt{\frac{p_s}{p_{n+1}}}c_t \right]^2.
\end{align}
\end{strip}


\section*{Acknowledgment}
The work of Q. Zhuang was supported by Defense Advanced Research Projects Agency (DARPA) under Young Faculty Award (YFA) Grant No. N660012014029 and Craig M. Berge Dean’s Faculty Fellowship of University of Arizona. 
C. Zhou would like to thank W. Evans and J. Guo for helpful consultations. The authors would like to thank Y. Polyanskiy for bringing Ref.~\cite{ordentlich2020strong} to our attention and pointing out its connection to our Theorem \ref{theo:1}.
\ifCLASSOPTIONcaptionsoff
  \newpage
\fi




\bibliographystyle{IEEEtran}
\bibliography{IEEEabrv,noisyNN.bib}

\begin{thebibliography}{10}
\providecommand{\url}[1]{#1}
\csname url@samestyle\endcsname
\providecommand{\newblock}{\relax}
\providecommand{\bibinfo}[2]{#2}
\providecommand{\BIBentrySTDinterwordspacing}{\spaceskip=0pt\relax}
\providecommand{\BIBentryALTinterwordstretchfactor}{4}
\providecommand{\BIBentryALTinterwordspacing}{\spaceskip=\fontdimen2\font plus
\BIBentryALTinterwordstretchfactor\fontdimen3\font minus
  \fontdimen4\font\relax}
\providecommand{\BIBforeignlanguage}[2]{{%
\expandafter\ifx\csname l@#1\endcsname\relax
\typeout{** WARNING: IEEEtran.bst: No hyphenation pattern has been}%
\typeout{** loaded for the language `#1'. Using the pattern for}%
\typeout{** the default language instead.}%
\else
\language=\csname l@#1\endcsname
\fi
#2}}
\providecommand{\BIBdecl}{\relax}
\BIBdecl

\bibitem{lin2018all}
X.~Lin, Y.~Rivenson, N.~T. Yardimci, M.~Veli, Y.~Luo, M.~Jarrahi, and A.~Ozcan,
  ``All-optical machine learning using diffractive deep neural networks,''
  \emph{Science}, vol. 361, no. 6406, pp. 1004--1008, 2018.

\bibitem{yao2020fully}
P.~Yao, H.~Wu, B.~Gao, J.~Tang, Q.~Zhang, W.~Zhang, J.~J. Yang, and H.~Qian,
  ``Fully hardware-implemented memristor convolutional neural network,''
  \emph{Nature}, vol. 577, no. 7792, pp. 641--646, 2020.

\bibitem{shen2017deep}
Y.~Shen, N.~C. Harris, S.~Skirlo, M.~Prabhu, T.~Baehr-Jones, M.~Hochberg,
  X.~Sun, S.~Zhao, H.~Larochelle, D.~Englund \emph{et~al.}, ``Deep learning
  with coherent nanophotonic circuits,'' \emph{Nat. Photonics}, vol.~11, no.~7,
  p. 441, 2017.

\bibitem{rumyantsev2020fundamental}
O.~I. Rumyantsev, J.~A. Lecoq, O.~Hernandez, Y.~Zhang, J.~Savall,
  R.~Chrapkiewicz, J.~Li, H.~Zeng, S.~Ganguli, and M.~J. Schnitzer,
  ``Fundamental bounds on the fidelity of sensory cortical coding,''
  \emph{Nature}, vol. 580, no. 7801, pp. 100--105, 2020.

\bibitem{fischer2014training}
A.~Fischer and C.~Igel, ``Training restricted boltzmann machines: An
  introduction,'' \emph{Pattern Recognit.}, vol.~47, no.~1, pp. 25--39, 2014.

\bibitem{bengio2013estimating}
Y.~Bengio, N.~L{\'e}onard, and A.~Courville, ``Estimating or propagating
  gradients through stochastic neurons for conditional computation,''
  \emph{arXiv:1308.3432}, 2013.

\bibitem{von1956probabilistic}
J.~Von~Neumann, ``Probabilistic logics and the synthesis of reliable organisms
  from unreliable components,'' \emph{Automata studies}, vol.~34, pp. 43--98,
  1956.

\bibitem{pippenger1988reliable}
N.~Pippenger, ``Reliable computation by formulas in the presence of noise,''
  \emph{IEEE Trans. Inf. Theory}, vol.~34, no.~2, pp. 194--197, 1988.

\bibitem{feder1989reliable}
T.~Feder, ``Reliable computation by networks in the presence of noise,''
  \emph{IEEE Trans. Inf. Theory}, vol.~35, no.~3, pp. 569--571, 1989.

\bibitem{evans1999signal}
W.~S. Evans and L.~J. Schulman, ``Signal propagation and noisy circuits,''
  \emph{IEEE Trans. Inf. Theory}, vol.~45, no.~7, pp. 2367--2373, 1999.

\bibitem{anthony2003boolean}
M.~Anthony, ``Boolean functions and artificial neural networks,'' \emph{Boolean
  Functions}, vol.~2, 2003.

\bibitem{dobrushin1970definition}
R.~L. Dobrushin, ``Definition of random variables by conditional
  distributions,'' \emph{Teoriya Veroyatnostei i ee Primeneniya}, vol.~15,
  no.~3, pp. 469--497, 1970.

\bibitem{erkip1998efficiency}
E.~Erkip and T.~M. Cover, ``The efficiency of investment information,''
  \emph{IEEE Trans. Inf. Theory}, vol.~44, no.~3, pp. 1026--1040, 1998.

\bibitem{duchi2013local}
J.~C. Duchi, M.~I. Jordan, and M.~J. Wainwright, ``Local privacy and
  statistical minimax rates,'' in \emph{2013 IEEE 54th Annual Symposium on
  Foundations of Computer Science}.\hskip 1em plus 0.5em minus 0.4em\relax
  IEEE, 2013, pp. 429--438.

\bibitem{gacs2001reliable}
P.~G{\'a}cs, ``Reliable cellular automata with self-organization,'' \emph{J.
  Stat. Phys.}, vol. 103, no. 1-2, pp. 45--267, 2001.

\bibitem{gray2001reader}
L.~F. Gray, ``A reader's guide to gacs's “positive rates” paper,'' \emph{J.
  Stat. Phys.}, vol. 103, no. 1-2, pp. 1--44, 2001.

\bibitem{toom1974nonergodic}
A.~L. Toom, ``Nonergodic multidimensional system of automata,'' \emph{Problemy
  Peredachi Informatsii}, vol.~10, no.~3, pp. 70--79, 1974.

\bibitem{liggett2012interacting}
T.~M. Liggett, \emph{Interacting particle systems}.\hskip 1em plus 0.5em minus
  0.4em\relax Springer Science \& Business Media, 2012, vol. 276.

\bibitem{wolfram1983statistical}
S.~Wolfram, ``Statistical mechanics of cellular automata,'' \emph{Rev. Mod.
  Phys.}, vol.~55, no.~3, p. 601, 1983.

\bibitem{grinstein1985statistical}
G.~Grinstein, C.~Jayaprakash, and Y.~He, ``Statistical mechanics of
  probabilistic cellular automata,'' \emph{Phys. Rev. Lett.}, vol.~55, no.~23,
  p. 2527, 1985.

\bibitem{yao2020classical}
N.~Y. Yao, C.~Nayak, L.~Balents, and M.~P. Zaletel, ``Classical discrete time
  crystals,'' \emph{Nat. Phys.}, vol.~16, no.~4, pp. 438--447, 2020.

\bibitem{impagliazzo1997size}
R.~Impagliazzo, R.~Paturi, and M.~E. Saks, ``Size--depth tradeoffs for
  threshold circuits,'' \emph{SIAM J. Comput.}, vol.~26, no.~3, pp. 693--707,
  1997.

\bibitem{ahlswede1976spreading}
R.~Ahlswede and P.~G{\'a}cs, ``Spreading of sets in product spaces and
  hypercontraction of the markov operator,'' \emph{The annals of probability},
  pp. 925--939, 1976.

\bibitem{evans1994information}
W.~S. Evans, \emph{Information theory and noisy computation}.\hskip 1em plus
  0.5em minus 0.4em\relax Citeseer, 1994, vol.~94, no.~57.

\bibitem{polyanskiy2017strong}
\BIBentryALTinterwordspacing
Y.~Polyanskiy and Y.~Wu, ``Strong data-processing inequalities for channels and
  bayesian networks,'' in \emph{Convexity and Concentration}.\hskip 1em plus
  0.5em minus 0.4em\relax Springer, 2017, pp. 211--249. [Online]. Available:
  \url{http://people.lids.mit.edu/yp/homepage/data/simple-IMA.pdf}
\BIBentrySTDinterwordspacing

\bibitem{semenova2019fundamental}
N.~Semenova, X.~Porte, L.~Andreoli, M.~Jacquot, L.~Larger, and D.~Brunner,
  ``Fundamental aspects of noise in analog-hardware neural networks,''
  \emph{Chaos: An Interdisciplinary Journal of Nonlinear Science}, vol.~29,
  no.~10, p. 103128, 2019.

\bibitem{cover1999elements}
T.~M. Cover, \emph{Elements of information theory}.\hskip 1em plus 0.5em minus
  0.4em\relax John Wiley \& Sons, 1999.

\bibitem{lin2017does}
H.~W. Lin, M.~Tegmark, and D.~Rolnick, ``Why does deep and cheap learning work
  so well?'' \emph{J. Stat. Phys.}, vol. 168, no.~6, pp. 1223--1247, 2017.

\bibitem{furst1984parity}
M.~Furst, J.~B. Saxe, and M.~Sipser, ``Parity, circuits, and the
  polynomial-time hierarchy,'' \emph{Mathematical systems theory}, vol.~17,
  no.~1, pp. 13--27, 1984.

\bibitem{haastad1987computational}
J.~T. H{\aa}stad, \emph{Computational limitations for small-depth
  circuits}.\hskip 1em plus 0.5em minus 0.4em\relax MIT press, 1987.

\bibitem{chen2019bootstrapping}
L.~Chen and R.~Tell, ``Bootstrapping results for threshold circuits “just
  beyond” known lower bounds,'' in \emph{Proceedings of the 51st Annual ACM
  SIGACT Symposium on Theory of Computing}, 2019, pp. 34--41.

\bibitem{turing1990chemical}
A.~M. Turing, ``The chemical basis of morphogenesis,'' \emph{Bull. Math.
  Biol.}, vol.~52, no. 1-2, pp. 153--197, 1990.

\bibitem{hoeffding1994probability}
W.~Hoeffding, ``Probability inequalities for sums of bounded random
  variables,'' in \emph{The Collected Works of Wassily Hoeffding}.\hskip 1em
  plus 0.5em minus 0.4em\relax Springer, 1994, pp. 409--426.

\bibitem{berman1988investigations}
P.~Berman and J.~Simon, ``Investigations of fault-tolerant networks of
  computers,'' in \emph{Proceedings of the twentieth annual ACM symposium on
  Theory of computing}, 1988, pp. 66--77.

\bibitem{ordentlich2020strong}
O.~Ordentlich and Y.~Polyanskiy, ``Strong data processing constant is achieved
  by binary inputs,'' \emph{arXiv preprint arXiv:2010.01987}, 2020.

\end{thebibliography}
\end{document}